\documentclass[10pt,english]{article}
\usepackage{graphicx} 
\usepackage[T1]{fontenc}
\usepackage[margin=2cm]{geometry}
\usepackage{amssymb,amsmath,amsthm, amsfonts}
\usepackage[ruled]{algorithm2e}
\usepackage{hyperref}

\newtheorem{theorem}{Theorem}
\newtheorem{lemma}{Lemma}

\newtheorem{definition}{Definition}

\usepackage{biblatex} 
\author{
  Susanne Albers \thanks{\texttt{susanne.albers@tum.de}.Technical University of Munich, Germany; TUM School of Computation, Information and Technology, Department of Computer Science}
  \and
  G. Wessel van der Heijden \thanks{\texttt{wessel.heijden@tum.de}. Technical University of Munich, Germany; TUM School of Computation, Information and Technology, Department of Computer Science. Work supported by the Deutsche Forschungsgemeinschaft (DFG, German Research Foundation) - GRK 2201/2 - Projektnummer 277991500}
}

\title{Scheduling to Maximize Weighted Throughput with an Active-Time Budget}

\date{June 2026}
\date{}

\begin{document}

\maketitle

\begin{abstract}
We study the active-time scheduling problem with weighted throughput maximization. 
In this setting, a set of $n$ jobs $J$ arrive at integer release times, each with an integer processing time and integer deadline. 
Jobs may be preempted at integer time slot boundaries. 
A schedule assigns jobs to time slots, with at most $m$ jobs assigned to the same time slot. 
A slot is called \emph{active} if at least one job is scheduled in it. 
Instead of scheduling all jobs to minimize the number of active time slots, we consider the more general variant of \emph{weighted throughput} with an active-time budget $K$, where each job $j\in J$ has a weight $w_j$. 
The objective is to maximize the total weight of \emph{completed} jobs using at most $K$ active time slots. 
This means that partially scheduled jobs do not count towards the objective. 
The classical active-time minimization problem is recovered by asking whether all jobs can be completed within a given active-time budget. 
We give hardness, approximation, and exact algorithmic results. 
For general intervals with unbounded parallelism, we prove NP-hardness, rule out an FPTAS unless $\mathrm{P}=\mathrm{NP}$, and give a pseudo-polynomial time $\Omega(1/\log K)$-approximation. 
For proper intervals, we prove a canonical structural lemma and obtain an exact $(nK)^{O(m)}$-time algorithm.
For laminar intervals, we give an exact $f(K,m)\cdot n^{O(1)}$-time algorithm. 
\end{abstract}

\section{Introduction}
Scheduling for parallel machines is a fundamental problem in computer science and operations research. 
Energy consumption is an important objective in scheduling and has been studied extensively \cite{DBLP:journals/cacm/Albers10}. 
In energy sensitive environments such as data centers and cloud platforms, a basic way to reduce cost is to reduce the amount of time that a machine is active. 
Some models penalize processor activation with explicit startup costs \cite{DBLP:journals/scheduling/DemaineGHSZ13}, which gives algorithms that avoid idle gaps. 
In contrast, in cloud-computing settings the cost of a rented machine is often closely related to the active usage time. 
This naturally leads to scheduling problems with an \emph{active-time budget}. 
Given a limited amount of machine active-time, complete a subset of the jobs. 
We study this question in the active-time scheduling model introduced by Chang et al. \cite{DBLP:journals/algorithmica/ChangGK14}. 
Unlike the classical objective of completing all jobs while minimizing active time, we are given an active-time budget and the objective is to maximize the total weight of completed jobs. 

We consider the following scheduling problem. 
We are given a set of \(n\) jobs \(J=\{1,\dots,n\}\). 
Each job \(j\in J\) has an integer release time \(r_j\), an integer deadline \(d_j\), an integer processing time \(p_j\), and a nonnegative weight \(w_j\). 
Time is partitioned into discrete integer time slots and jobs may be preempted at time slot boundaries. 
In each slot, at most \(m\) jobs can be processed simultaneously. 
We say a time slot \(t\) is \emph{active} when at least one job unit is assigned to time slot \(t\). 
The schedule may use at most \(K\) active time slots. 
A job \(j\) is completed if it receives \(p_j\) units of processing in time slots contained in its window \(W_j=[r_j,d_j]\). 
The objective is the total weight of completed jobs, also called the weighted throughput. 
If a job is only partially scheduled, it does not count towards the objective. 
Any processing assigned to an uncompleted job can be deleted without hurting the objective or violating feasibility. 
Throughout the paper, we assume w.l.o.g. that every processed job is completed. 
As in much of the active-time scheduling literature, we use the discrete-time model and state running times that are polynomial in numeric parameters such as \(K\) and the time horizon. 
These running times are pseudo-polynomial under binary encoding and are comparable to the standard time-expanded flow formulations used to check feasibility for active-time minimization \cite{DBLP:journals/scheduling/ChangKM17,DBLP:conf/spaa/KumarK18,DBLP:conf/isaac/CaoFLMRU22,DBLP:journals/scheduling/CalinescuW21}. 

Scheduling with an active-time budget to maximize weighted throughput contains the feasibility version of the active-time scheduling model introduced by Chang et al. \cite{DBLP:journals/algorithmica/ChangGK14}. 
In the original active-time minimization problem, every job must be scheduled for \(p_j\) time slots within its window \([r_j,d_j]\) and the goal is to minimize the total active time slots. 
Our model looks at the opposite perspective, where the active-time budget \(K\) is fixed, and the algorithm must decide which jobs to complete. 
The minimization problem can be recovered by asking, for a given value of \(K\), whether the optimum throughput equals the total weight of all jobs. 
A binary search over \(K\) then gives the minimum active-time. 
For \(p_j=1\), Chang et al. \cite{DBLP:journals/algorithmica/ChangGK14} gave an \(O(n\log n)\)-time algorithm that solves the active-time minimization problem optimally. 
They also considered throughput maximization for unit length jobs and an active-time budget \(K\le n\), for which they obtained an optimal dynamic programming algorithm with a (pseudo-polynomial) running time of \(O(n^5K)\). 
For arbitrary processing times, there have been several attempts to obtain good approximation algorithms for the active-time minimization problem. 
Chang et al. \cite{DBLP:journals/scheduling/ChangKM17} gave a \(2\)-approximation for the active-time scheduling problem with arbitrary processing times using an LP rounding procedure. 
Kumar and Khuller \cite{DBLP:conf/spaa/KumarK18} showed that a \(2\)-approximate ratio can also be obtained using a simple greedy procedure for the active-time scheduling problem that requires calculating feasibility over all jobs at each iteration. 
Later, C\u{a}linescu and Wang \cite{DBLP:journals/scheduling/CalinescuW21} gave an alternative, more involved LP rounding procedure achieving the same \(2\)-approximation ratio. 
Its integrality gap is at least \(5/3\), but they conjecture it is strictly better than \(2\). 
Cao et al. \cite{DBLP:conf/isaac/CaoFLMRU22} broke the \(2\)-approximation barrier for the nested active-time scheduling setting, giving a \(9/5\)-approximation using a complex LP rounding procedure. 
The complexity of the active-time scheduling problem remained open for a while, until Saha and Purohit \cite{DBLP:journals/corr/abs-2112-03255} showed that active-time scheduling problem with arbitrary processing times is NP-hard, via a reduction from Balanced SAT. 
Later Cao et al. showed that the active-time scheduling problem remains NP-hard, even with nested intervals \cite{DBLP:conf/isaac/CaoFLMRU22}. 

In a survey on busy-time and active-time scheduling, Chau and Li raised the throughput maximization setting as an interesting open research direction and as a way to settle the complexity status of the active-time scheduling problem \cite{DBLP:conf/birthday/ChauL20}. 
To the best of our knowledge, the only prior work on throughput maximization in the active-time model was by Chang et al. \cite{DBLP:journals/algorithmica/ChangGK14}, who considered unit processing time jobs. 
Therefore, the weighted-throughput problem with arbitrary processing times has remained largely unexplored.

\subsection{Related work}
Several scheduling models are closely related to the active-time scheduling problem. 
We briefly discuss the most relevant ones. 

\subsubsection{Busy-time scheduling}
Busy-time scheduling is closely related to active-time scheduling. 
It can be seen as the non-preemptive version of active-time scheduling. 
Jobs are assigned to machines with the objective to minimize the total time during which machines are busy. 
This model is motivated by energy efficiency and was originally studied in connection with network design \cite{DBLP:conf/soda/WinklerZ03}. 
Unlike active-time scheduling, busy-time scheduling is non-preemptive and allows several machines, each with a bounded parallelism parameter. 
This difference rules out constant competitive ratios for bounded parallelism in some online settings \cite{DBLP:journals/tcs/ShalomVWYZ14} and results in different solution techniques. 
For a broader comparison of busy-time and active-time scheduling, we refer to the survey of Chau and Li \cite{DBLP:conf/birthday/ChauL20}.

\subsubsection{Precedence constrained scheduling}
Papadimitriou and Yannakakis studied scheduling unit-time jobs with interval-ordered precedence constraints on \(m\) processors and observed that this model can be interpreted as minimizing the number of time steps in which at least one processor is active \cite{DBLP:journals/siamcomp/PapadimitriouY79}. 
They obtained a greedy \(O(n^2)\) algorithm which was improved by Chang et al. to \(O(n\log n)\) \cite{DBLP:journals/algorithmica/ChangGK14} for the active-time scheduling problem. 
\footnote{In Appendix~\ref{sec:hardness}, we describe a related relaxation of active-time scheduling, which we call \emph{Interval-Ordered Chains} (IOC).
We believe this setting is of independent interest to the precedence-constrained scheduling community. }

\subsubsection{Calibration scheduling}
Calibration scheduling is another related activation model where a machine can be calibrated for \(B\) consecutive time slots, during which it can process one unit-sized job per slot. 
The objective is typically to minimize the number of calibrations. 
The weighted throughput maximization version was considered by Chau et al. \cite{DBLP:conf/wads/ChauFLWZ019}. 
For unit jobs, they obtained a \((1/3)\)-approximation and for arbitrary processing times they obtained a \(((1-\varepsilon)/3\)-approximation in pseudo-polynomial time and a \(((1-\varepsilon)/18\)-approximation in polynomial time.

\subsection{Our contributions}
We study the problem of scheduling jobs to maximize weighted throughput with an active-time budget. 
To the best of our knowledge, this is the first systematic study of the weighted throughput variant with arbitrary processing times. 
Our results are summarized below. 
\begin{center}
\begin{tabular}{lll}
\hline
Interval family & Parameters & Result \\
\hline
General & \(m=\infty\) & NP-hard; no FPTAS unless \(\mathrm{P}=\mathrm{NP}\) \\
& & \(\Omega(1/\log K)\)-approx. in \(\mathrm{poly}(n,K)\) time \\
Proper & parameterized \(m\) & exact \((nK)^{O(m)}\) XP \\
Laminar & parameterized \(K,m\) & exact \(f(K,m)\cdot n^{O(1)}\) FPT \\
\hline
\end{tabular}
\end{center}

We first consider the hardness of the setting with unbounded \(m\), and show this is already hard. 
With \(m=\infty\), the problem becomes to choose active time slots within interval demands. 
We prove NP-hardness and give an \(\Omega(1/\log K)\)-approximation by grouping jobs by processing time. 
For proper intervals, we prove a canonical-ordering lemma and show that there exists an optimum schedule with at most \(2m-1\) jobs split across any time-slot boundary. 
This results in an exact \((nK)^{O(m)}\)-time algorithm, which is pseudo-polynomial XP in the capacity parameter \(m\). 
For laminar intervals, we use the nested structure in a dynamic program over capacity profiles, which gives an exact FPT algorithm parameterized by \(K\) and \(m\). 

The remainder of the paper is organized as follows. 
Section~\ref{sec:general-unbounded} considers general intervals with unbounded parallelism and proves the hardness and approximation results. 
Section~\ref{sec:proper} considers proper intervals and gives the XP algorithm parameterized by \(m\). 
Section~\ref{sec:nested} gives the FPT algorithm for laminar intervals.

\section{General intervals with unbounded parallelism}\label{sec:general-unbounded}
In this section, we consider general intervals with unbounded parallelism. 
When parallelism is unbounded, capacity is no longer an issue since an active time slot can process one unit of every job with a window that contains that slot. 
For active-time minimization, this setting already admits an optimal algorithm using a greedy activation rule \cite{DBLP:journals/scheduling/ChangKM17}. 
This does not hold for the weighted throughput setting, since jobs must be selected before time slots are committed and an accepted job can force additional activations. 
A simple instance in which all job windows are disjoint and \(m\) is unbounded already captures the knapsack problem. 
The knapsack problem can be written as 
\(
\max\left\{\sum_{i=1}^{n}w_ix_i : \sum_{i=1}^{n}c_ix_i\le K\right\}
\) 
with \(x_i\in\{0,1\}\) and where \(w_i\) is the profit of item \(i\), \(c_i\) is its cost, and \(K\) is the budget. 
We recover this problem in the active-time formulation as follows. 
For each item \(i\), create one job \(j\) with weight \(w_j=w_i\) and processing time \(p_j = c_i\). 
Choose the windows \([r_j,d_j]\) to be pairwise disjoint and long enough to schedule the corresponding job. 
Accepting a job \(j\) for disjoint jobs costs exactly \(p_j\) to the active-time budget and adds \(w_j\) weight to the objective. 
Therefore, this models the knapsack problem. 
Note that the knapsack problem is weakly NP-hard, and a pseudo-polynomial DP exists. 
When windows overlap, the cost to the budget of accepting a job is no longer fixed and depends on which active time slots were opened for other accepted jobs. 
The marginal active-time cost of job \(j\) is \(p_j\) minus the number of already active slots in \([r_j,d_j]\) that can be used by \(j\). 
This means that the marginal cost depends on the set of selected jobs and on the locations of the active slots. 
This is the main distinction from 0/1 knapsack. 
The cost of a selected job set is not described by a fixed item cost, but it depends on where the active time slots are placed and on how different jobs share these time slots. 

\begin{theorem}
    Weighted throughput active-time scheduling on general intervals is NP-hard even when \(m=\infty\). 
    Furthermore, no FPTAS exists for the weighted throughput scheduling problem with an active-time budget unless \(\mathrm{P}=\mathrm{NP}\). 
\end{theorem}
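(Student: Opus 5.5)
The plan is a reduction that produces instances with \emph{polynomially bounded integer weights} (indeed unit weights after duplicating jobs) and a polynomially bounded horizon and budget \(K\). This gives both claims at once. NP-hardness is immediate. For the FPTAS claim, suppose an FPTAS existed and ran it with \(\varepsilon = 1/(W+1)\), where \(W=\sum_j w_j\) is polynomial in the input size. It would return a solution of value at least \((1-\varepsilon)\mathrm{OPT} > \mathrm{OPT}-1\), hence exactly \(\mathrm{OPT}\), in polynomial time. That would decide the NP-hard problem and force \(\mathrm{P}=\mathrm{NP}\). Note that the knapsack embedding above is useless for this purpose: with polynomial weights and costs, knapsack is polynomially solvable. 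So the reduction must exploit genuine overlap of windows, and I would reduce from Max-2-SAT, which is NP-hard with unit clause weights.

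\textbf{Variable gadgets.} Use \(m=\infty\). Lay out the variables \(x_1,\dots,x_N\) left to right, each as a block of two consecutive slots \(A_i, B_i\), with one slot designated \(T\) and the other \(F\). For each block add a \emph{selection job} with window \([A_i,B_i]\) and \(p=1\). Give it weight \(M\) larger than the number of clauses, realised by \(M\) identical unit-weight copies. Set \(K=N\). Then every optimal schedule activates exactly one slot per block, since any schedule missing a selection job loses more than all clauses can gain. The chosen slot encodes the truth value of \(x_i\).

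\textbf{Clause jobs.} A window starting at the second slot of block \(i\) and ending at the first slot of block \(k>i\) contains exactly one active slot from each of the \(k-i-1\) full blocks in between. It contains \([\text{second slot of }i\text{ active}]+[\text{first slot of }k\text{ active}]\) active slots from the two partial blocks. Setting \(p=k-i\) therefore makes the job completable iff the literal ``second slot of \(i\)'' or the literal ``first slot of \(k\)'' is true. A 2-clause is thus encoded by a unit-weight job, provided its left literal sits in the right slot of its block and its right literal in the left slot of its block. Then \(\mathrm{OPT}=NM+(\text{max number of satisfiable clauses})\).

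\textbf{Main obstacle: literal polarity.} I expect the hardest part to be handling arbitrary polarities. A single \(T/F\) orientation per block only realises clauses of a restricted form, such as \((\neg x_i\vee x_k)\) with \(i<k\), and that family may not be NP-hard. My first attempt is to give each variable several copies of its block, in both orientations \((T,F)\) and \((F,T)\), placed at positions chosen so that every clause has a usable left copy and right copy. The copies of one variable must then be forced to agree using heavy consistency jobs, built from the same contiguous-window counting trick: windows whose demand \(p\) equals the number of full blocks spanned plus one. These are paired so that any disagreement between two copies of \(x_i\) violates one of them. The budget becomes \(K=\) the total number of blocks. I would check carefully that each consistency pair forbids exactly the inconsistent assignments; if the counting does not close, I would instead encode the consistency with extra dummy blocks.
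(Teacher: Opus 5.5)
There is a genuine gap: the gadget you use for clauses, and also for consistency, cannot produce hard instances. Write $y_b$ for the indicator that the right slot of block $b$ is active. The heavy selection jobs and $K$ equal to the number of blocks force exactly one active slot per block. Then a job whose window runs from the right slot of block $i$ to the left slot of block $k>i$ with $p=k-i$ is completed iff $y_i\vee\neg y_k$.

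Every such job is completed by the schedule that activates the left slot of every block ($y\equiv 0$), and also by the all-right schedule. Your clause jobs and your consistency jobs (demand equal to the number of full blocks spanned plus one) are both of this form. So the optimum of your instance equals the total weight for every formula. The all-left schedule assigns inconsistent values to any variable whose copies appear in both orientations, yet it completes everything.

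More care with the counting will not repair this. With one active slot per block, the active count of an interval window depends only on its two boundary blocks. So the only nontrivial job types are unary constraints on $y_i$, your OR type $y_i\vee\neg y_k$, and the AND type $y_i\wedge\neg y_k$ with $i<k$. Without AND jobs, the lost weight is a nonnegative combination of directed-cut terms $[y_i=0,\ y_k=1]$ and unary terms. A minimum $s$-$t$ cut minimizes this in polynomial time. Your unspecified fallback with ``extra dummy blocks'' does not address this.

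The missing idea is to raise the demand by one. With window from $b_i$ to $a_j$ and $p=j-i+1$, the job is completed iff both $b_i$ and $a_j$ are active. The one-directional structure you regarded as the main obstacle then becomes an asset. It is exactly \textsc{MaxDiCut} on a DAG whose vertices are numbered in topological order: read $b_i$ active as $i\in S$ and $a_j$ active as $j\notin S$, so an arc job is completed iff its arc leaves $S$. Since \textsc{MaxDiCut} on DAGs is NP-hard (Lampis, Kaouri and Mitsou), no polarity handling or copy consistency is needed.

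This is the paper's reduction, with unit-weight arc jobs, vertex jobs of weight $M=n^2+1$, and $K=n$. Your FPTAS argument via polynomially bounded integer weights and $\varepsilon=1/(W+1)$ matches the paper's and goes through once this reduction is in place.
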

\begin{proof}
We consider the hardness of the problem using a reduction from the decision version of the maximum directed cut on acyclic digraphs. 
The cardinality maximum directed cut (\textsc{MaxDiCut}) on DAGs (directed acyclic graphs) was proven to be NP-hard and APX-hard by Lampis et al. \cite{DBLP:journals/disopt/LampisKM11}. 
We define the cardinality \textsc{MaxDiCut} problem on DAGs. 
Given is a directed acyclic graph \(G=(V,E)\) with vertex set \(V=\{1,\dots,n\}\) (topologically ordered) and directed arc set \(E\) and an integer \(B\). 
The objective is to decide whether there is a set \(S\subseteq V\) such that at least \(B\) arcs go from \(S\) to \(V\setminus S\). 
We reduce this problem to weighted-throughput active-time scheduling with unbounded \(m=\infty\). 
For each \(i\in V\), let \(a_i = 2i-1\) and \(b_i=2i\) be two adjacent time slots. 
Intuitively, activating time slot \(a_i\) indicates that \(i\notin S\), and choosing \(b_i\) indicates \(i\in S\). 
The active-time budget is \(K=n\). 
We consider two types of jobs. 
One set of jobs are the vertex jobs \(i\) with job window \(W_i=[a_i,b_i]\), processing time \(p_i=1\) and weight \(w_i=M\) for a large \(M= n^2+1\). 
The second set of jobs are the arc jobs \((i,j)\), one for each \(i,j\in E\) with \(i<j\) by the topological ordering with job window \(W_{ij}=[b_i,a_j]\), processing time \(p_{ij} = j - i + 1\) and weight \(w_{ij}=1\). 
See Figure~\ref{fig:general_unbounded_np-hardness} for an example construction. 

Any optimal solution has to schedule all vertex jobs. 
When one of \(a_i\) and \(b_i\) is activated for each \(i\in V\), this gives a feasible schedule with an objective weight of at least \(nM\). 
Any schedule that is missing one vertex job has an objective value of at most \((n-1)M + |E| < nM\) since \(M=n^2+1>n^2\ge|E|\). 
The pairs \(\{a_i,b_i\}\) are pairwise disjoint and each vertex job needs one slot inside its own pair, so scheduling all \(n\) vertex jobs uses all \(K=n\) active slots. 
Since an optimal schedule must activate exactly one of \(a_i,b_i\) for each \(i\), no other slot is active. 
This means that the remaining choice is which of those two time slots \(a_i,b_i\) to activate for each vertex \(i\). 
We interpret \(b_i\) active as \(i\in S\), and \(a_i\) active as \(i\notin S\). 

Consider an arc job \((i,j)\in E\). 
Its window is \(W_{ij}=[b_i,a_j]\), and one time slot is active for each intermediate vertex \(i+1,\ldots,j-1\). 
Therefore, for any active set of time slots that schedules all vertex jobs, the set of active time slots in its window is \(|A\cap W_{ij}| = (j-i-1) + |A\cap\{b_i\}| + |A\cap\{a_j\}|\). 
Since \(p_{ij}=j-i+1\), the arc job \((i,j)\) is completed if and only if both \(b_i\) and \(a_j\) are active. 
Similarly, it is completed if and only if \(i\in S\) and \(j\notin S\). 
After all vertex jobs are accepted, the additional weight that is gained for the objective from arc jobs is exactly the number of arcs crossing from \(S\) to \(V\setminus S\). 
The objective value for the weighted throughput scheduling problem becomes \(nM+\max_{S\subseteq V}\delta^+(S)\), where \(\delta^+(S)\) denotes the number of outgoing arcs from \(S\) to \(V\setminus S\). 
Therefore, the original \textsc{MaxDiCut} instance has a directed cut of size at least \(B\) if and only if the constructed scheduling instance has value at least \(nM+B\). 
This establishes NP-hardness. 

The same reduction also rules out an FPTAS unless \(\mathrm{P}=\mathrm{NP}\). 
All weights are integers, so every solution value is integer and the optimum is at most \(U=nM+|E|=O(n^3)\). 
Running an FPTAS with \(\varepsilon=\tfrac{1}{U+1}\) takes polynomial time in the input size and \(1/\varepsilon\), i.e. polynomial, and returns a value greater than \((1-\varepsilon)\mathrm{OPT}>\mathrm{OPT}-1\). 
Since the optimum is integer, this equals \(\mathrm{OPT}\) and solve the NP-hard decision problem. 
\end{proof}

\begin{figure}
    \centering
    \includegraphics[page=1]{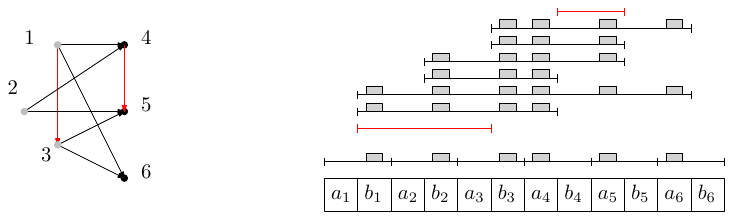}
    \caption{Example reduction from the maximum directed cut on acyclic digraph to the maximum weight throughput problem with an active-time budget. The gray rectangles indicate where a job unit is scheduled. The red intervals and arcs indicate the arcs or jobs that are not in the cut or not completed. }
    \label{fig:general_unbounded_np-hardness}
\end{figure}

We show an approximation algorithm for the problem. 
When \(m=\infty\), capacity constraints disappear. 
Therefore, a job \(j\) is completed exactly when the chosen active set \(A\) has at least \(p_j\) active slots in the window \(W_j=[r_j,d_j]\), i.e. a job is active when \(|A\cap W_j|\ge p_j\). 

Throughout this subsection, time is discrete. 
A window \(W_j=[r_j,d_j]\) denotes the set of time slots \(\{r_j,r_j+1,\dots,d_j\}\) and hence \(|W_j|=d_j-r_j+1\). 
A set of active slots \(A\) completes job \(j\) if \(|A\cap W_j|\ge p_j\). 
We first show how to compress the time horizon to a smaller sized set of time slots. 
It is enough to consider only \(K\) slots around release-time and deadline breakpoints. 
This gives a candidate set of \(O(nK)\) time slots, so the algorithms based on this set are polynomial in \(n\) and the numeric value of \(K\). 

\begin{lemma}\label{lem:canonical_time_slots}
    Let \(C=\bigcup_{j\in J}\{r_j-K,\dots,r_j+K\}\cup\{d_j-K,\dots,d_j+K\}\) be the time slots \(K\) around release time and deadline breakpoints. 
    For every active set \(A\) with \(|A|\le K\), there is an active set \(A'\subseteq C\) with \(|A'|\le|A|\) such that every job completed by \(A\) is also completed by \(A'\). 
\end{lemma}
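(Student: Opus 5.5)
The plan is an exchange argument: I would slide active slots, without changing which windows they lie in, until every active slot lies in \(C\). The key observation is that the quantity \(|A\cap W_j|\) depends only on how many active slots lie in each "elementary segment" of the time line. So active slots may be moved freely inside such a segment.

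First I would define the breakpoints. Let \(P=\{r_j : j\in J\}\cup\{d_j+1 : j\in J\}\), sorted as \(q_1<q_2<\dots<q_L\). For \(\ell<L\), the elementary segment \(I_\ell=\{q_\ell,\dots,q_{\ell+1}-1\}\) has a fixed property: every slot \(t\in I_\ell\) lies in exactly the same set of windows \(J_\ell=\{j : r_j\le q_\ell \le d_j\}\). This holds because no window starts or ends strictly inside \(I_\ell\). Slots \(t<q_1\) or \(t\ge q_L\) lie in no window at all.

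Next I would construct \(A'\). Any active slot outside \([q_1,q_L-1]\) is simply deleted, which changes no count \(|A\cap W_j|\). For each segment \(I_\ell\), let \(k_\ell=|A\cap I_\ell|\le |A|\le K\). Replace \(A\cap I_\ell\) by the first \(k_\ell\) slots of the segment, namely \(\{q_\ell,\dots,q_\ell+k_\ell-1\}\). These slots exist inside \(I_\ell\) because \(k_\ell\le |I_\ell|\).

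Then I would check that the new slots lie in \(C\). The point \(q_\ell\) is either some \(r_j\) or some \(d_j+1\). In the first case \(\{r_j,\dots,r_j+K-1\}\subseteq C\). In the second case \(\{d_j+1,\dots,d_j+K\}\subseteq C\). In both cases the chosen slots, whose offsets from \(q_\ell\) are less than \(k_\ell\le K\), lie in \(C\). The segments are disjoint, so the moved slots remain distinct, and \(|A'|\le|A|\).

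Finally I would verify that completions are preserved. Every window \(W_j\) is a disjoint union of full elementary segments, since \(r_j\) and \(d_j+1\) are both breakpoints. Therefore \(|A'\cap W_j|=\sum_{\ell: I_\ell\subseteq W_j}k_\ell=|A\cap W_j|\ge p_j\) for every job completed by \(A\).

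I expect the main (mild) obstacle to be the bookkeeping at the boundaries. One issue is using \(d_j+1\) rather than \(d_j\) as the breakpoint, and checking that this shift stays inside the \(\pm K\) neighbourhood; it does, since \(K\ge1\). The other is the treatment of slots outside all windows. Everything else follows directly from the fact that window membership is constant on each segment.
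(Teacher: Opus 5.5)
Your proof is correct and is essentially the paper's argument: partition the timeline into pieces on which window membership is constant, move the active slots of each piece to its left end, and note that these positions lie within distance \(K\) of an endpoint. The only difference is bookkeeping. You use breakpoints \(r_j\) and \(d_j+1\) to get half-open segments, whereas the paper leaves the endpoint slots in place (they already lie in \(C\)) and compacts only inside the open gaps between consecutive endpoints.
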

\begin{proof}
    Sort by all release-time and deadline endpoints. 
    Endpoint slots themselves belong to \(C\). 
    Between two consecutive endpoints \(e<f\), every slot in the open gap \(\{e+1,\ldots,f-1\}\) belongs to exactly the same set of job windows, since no release time or deadline occurs inside the gap. 
    If \(A\) uses \(h\) time slots in this gap, then \(h\le|A|\le K\) and the gap has at least \(h\) slots.
    The slots \(e+1,\dots,e+h\) all lie in the gap and since \(h\le k\), they all lie in \(C\). 
    Replace the \(h\) active time slots of \(A\) in the gap by any \(h\) candidate slots in the same gap. 
    This preserves for every job the number of active time slots it overlaps inside the gap. 
    Slots outside the range of all endpoints are contained in no job window and can be deleted. 
    Applying this replacement independently to all gaps gives the desired set \(A'\subseteq C\). 
    After this replacement, any algorithm only has to consider \(O(nK)\) candidate time slots. 
\end{proof}

We give an \(\Omega(1/\log K)\)-approximation for unbounded \(m\). 
Any job with \(p_j>K\) or \(p_j>|W_j|\) can not be scheduled and is discarded, so throughout this subsection we assume \(p_j\le \min\{K, |W_j|\}\) for all jobs \(j\in J\). 
We consider an optimal set of time slots \(A^*\) of value \(\mathrm{OPT}\) and write \(w(S)=\sum_{j\in S}w_j\) for a selected set of jobs \(S\subseteq J\). 
The approximation algorithm partitions the jobs \(J\) into buckets of similar processing time. 
Each bucket can then be solved within a constant factor and the best schedule is kept over all buckets. 
Let \(q_{\max}=\lfloor K/2\rfloor\) and let \(Q=\{2^i:2^i\le q_{\max}\}\cup\{q_{\max}\}\) be the set of bucket sizes. 
Every job with \(p_j\le q_{\max}\) is put in the bucket \(J_q\) with \(q=\min\{q'\in Q: p_j\le q'\}\).
For some \(q\) with a predecessor \(q'\) in \(Q\), then \(q'>q/2\), so for every \(j\in J_q\), we have \(q/2<p_j\le q\). 
Let \(J_{\mathrm{large}}=\{j: p_j>q_{\max}\}\) be the bucket with large jobs. 
Let \(\mathrm{OPT}_q\) and \(\mathrm{OPT}_{\mathrm{large}}\) be the weights of completed jobs in the optimal schedule \(A^*\) corresponding to each bucket. 
Therefore, 
\[
\mathrm{OPT} = \mathrm{OPT}_{\mathrm{large}} + \sum_{q\le K/2}\mathrm{OPT}_q. 
\]

In order to complete a job \(j\) in bucket \(q\), we consider some time slot \(t\in [r_j,d_j]\). 
For any such time slot \(t\), activating all time slots \(\{t-q+1,\dots,t+q\}\) allows job \(j\) to schedule while activating \(2q\) time slots. 
For such a center \(t\), let \(B_q(t) = \{t-q+1,\dots,t+q\}\) be the block of activated time slots of size \(2q\). 
Since \(p_j\le q\), if \(t\in W_j\), then \(B_q(t)\cap W_j\) contains enough slots to complete \(j\). 

\begin{lemma}\label{lem:general_unbounded_schedule_jobs}
    If \(j\in J_q\) and \(t\in W_j\), then \(|B_q(t)\cap W_j|\ge p_j\). 
\end{lemma}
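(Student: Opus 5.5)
The plan is to view both sets as intervals of consecutive integers and to find a long interval inside their intersection. Here \(B_q(t)=\{t-q+1,\dots,t+q\}\) has \(2q\) slots and contains \(t\), with exactly \(q-1\) slots to the left of \(t\) and \(q\) slots to the right. Since \(t\in W_j=[r_j,d_j]\), the intersection is again an interval of consecutive integers containing \(t\):
\[
B_q(t)\cap W_j=\{\max(r_j,t-q+1),\dots,\min(d_j,t+q)\}.
\]
Its size is therefore \(\min(d_j,t+q)-\max(r_j,t-q+1)+1\). We will lower bound this by \(p_j\), using the standing assumptions \(p_j\le q\) (because \(j\in J_q\)) and \(p_j\le |W_j|=d_j-r_j+1\) (because unschedulable jobs were discarded).

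The key steps are a case split on whether the window is clipped on each side. If neither endpoint of \(W_j\) falls inside \(B_q(t)\), then \(B_q(t)\subseteq W_j\), and the intersection has size \(2q\ge q\ge p_j\). If both sides are clipped, then \(W_j\subseteq B_q(t)\), and the intersection is all of \(W_j\), of size \(|W_j|\ge p_j\).

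If only the left side is clipped, the intersection is \(\{r_j,\dots,t+q\}\) with \(r_j\le t\). It therefore contains \(\{t,\dots,t+q\}\), which has \(q+1\ge p_j\) slots.

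If only the right side is clipped, the intersection is \(\{t-q+1,\dots,d_j\}\) with \(d_j\ge t\). It therefore contains \(\{t-q+1,\dots,t\}\), which has \(q\ge p_j\) slots.

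The only delicate point is this last, asymmetric case. The block has just \(q-1\) slots to the left of \(t\), so we must make sure \(t\) itself is counted to reach \(q\) slots. Counting \(t\) is justified by \(t\in W_j\). Everything else is routine interval arithmetic, and the bound \(q/2<p_j\) from the bucket definition is not needed.
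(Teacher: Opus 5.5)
Your proof is correct and is essentially the paper's argument. The paper splits on \(|W_j|\le q\) (which forces \(W_j\subseteq B_q(t)\)) and otherwise writes \(|B_q(t)\cap W_j|=\min\{q,t-r_j+1\}+\min\{q,d_j-t\}\), whereas you split on which endpoints of \(W_j\) lie in the block; both routes reach \(|B_q(t)\cap W_j|\ge\min\{|W_j|,q\}\ge p_j\) using only \(p_j\le q\) and \(p_j\le|W_j|\), so your remark that \(p_j>q/2\) is not needed matches the paper.
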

\begin{proof}
    Consider the window \(W_j=[r_j,d_j]\) and recall \(r_j\le t\le d_j\). 
    If \(|W_j|\le q\), then \(r_j\ge d_j-q+1\ge t-q+1\) and \(d_j\le r_j+q-1\le t+q-1\). 
    Therefore \(W_j\subseteq B_q(t)\) and \(|B_q(t)\cap W_j|=|W_j|\ge p_j\). 
    If \(|W_j|>q\), then \(|B_q(t)\cap W_j|=\min\{q,t-r_j+1\}+\min\{q,d_j-t\}\) and therefore either \(|B_q(t)\cap W_j|=|W_j|\ge p_j\) when neither minimum is \(q\), or \(|B_q(t)\cap W_j|\ge q\). 
    In both cases, \(|B_q(t)\cap W_j|\ge\min\{|W_j|,q\}\ge p_j\) since \(p_j\le q\) and \(p_j\le|W_j|\). 
\end{proof}

We now show how to choose these center slots effectively. 
Within budget \(K\), we can select at most \(L_q=\lfloor\tfrac{K}{2q}\rfloor\) centers for bucket \(q\). 
The algorithm greedily selects \(L_q\) centers. 
In each iteration, it chooses an integer time slot \(t\) that maximizes the total weight of currently unhit jobs \(j\in J_q\) with \(t\in W_j\). 
This maximum-weight center time slot can be found by sweeping the release and deadline endpoints of the unhit job intervals. 
Overlapping blocks are allowed, since the active set returned for bucket \(q\) is the union of the chosen blocks. 

\begin{lemma}
    The greedy algorithm for bucket \(q\) returns a feasible set of active time slots \(A_q\) with weight at least \((1 - e^{-1/8})\mathrm{OPT}_q\). 
\end{lemma}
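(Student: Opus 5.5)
The plan is to read bucket $q$ as a weighted maximum-coverage problem over centers. I will show that greedy with $L_q$ centers competes against an optimal cover that uses at most $8L_q$ centers. The standard greedy bound $1-e^{-L/k^*}$ then gives the factor $1-e^{-1/8}$.

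\emph{Feasibility.} The returned set is $A_q=\bigcup_{\ell=1}^{L_q}B_q(t_\ell)$. Each block has $2q$ slots, so $|A_q|\le 2qL_q\le K$. Every job $j\in J_q$ hit by some chosen center $t_\ell\in W_j$ satisfies $|A_q\cap W_j|\ge|B_q(t_\ell)\cap W_j|\ge p_j$ by Lemma~\ref{lem:general_unbounded_schedule_jobs}, so it is completed. Hence the weight of $A_q$ is at least the total weight of jobs hit by the greedy centers.

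\emph{Few centers hit all of $\mathrm{OPT}_q$.} Let $S^*\subseteq J_q$ be the jobs of bucket $q$ completed by $A^*$, where $|A^*|\le K$. Sort $A^*=\{s_1<s_2<\dots<s_{k}\}$. Since $W_j$ is an interval, the active slots inside it form a run of consecutive indices $s_a,\dots,s_b$ of length at least $p_j$. Every $j\in J_q$ has $p_j>q/2$ and $p_j$ is an integer, so $p_j\ge\sigma:=\lfloor q/2\rfloor+1$. Take as centers $T^*=\{s_{i}: \sigma\mid i\}$. Any run of $\sigma$ consecutive indices contains a multiple of $\sigma$, so every $j\in S^*$ is hit by $T^*$. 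Moreover $|T^*|\le K/\sigma<2K/q$. Since $q\le q_{\max}=\lfloor K/2\rfloor$, we have $K/(2q)\ge1$, hence $L_q=\lfloor K/(2q)\rfloor\ge K/(4q)$. Therefore $|T^*|\le 8L_q$. The case $q=1$ works the same way with $\sigma=1$. For the bucket $q=q_{\max}$, I rely on the stated fact that its predecessor $q'$ satisfies $q'>q/2$.

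\emph{Greedy analysis.} I will use the standard weighted max-coverage argument. Let $k^*=|T^*|$, $C^*=w(S^*)=\mathrm{OPT}_q$, and let $G_\ell$ be the weight hit after $\ell$ greedy steps. The $k^*$ centers of $T^*$ together hit at least $C^*-G_\ell$ of the still unhit weight. So some center, and therefore the greedy choice, hits at least $(C^*-G_\ell)/k^*$ new weight. Hence $C^*-G_{\ell+1}\le(1-1/k^*)(C^*-G_\ell)$, and after $L_q$ steps
\[
G_{L_q}\ge\bigl(1-(1-1/k^*)^{L_q}\bigr)C^*\ge\bigl(1-e^{-L_q/k^*}\bigr)\mathrm{OPT}_q\ge\bigl(1-e^{-1/8}\bigr)\mathrm{OPT}_q .
\]
One check is needed here: the greedy maximum over integer $t$ may pick centers outside $C$ or outside $T^*$. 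This is harmless, because the argument only compares against the best single center, and the sweep finds the exact maximum.

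The main obstacle is the covering step: showing that the completed jobs of bucket $q$ can be stabbed by $O(K/q)$ points. The key idea is to index by rank in $A^*$ rather than by time, together with the lower bound $p_j>q/2$. The rest is bookkeeping of constants.
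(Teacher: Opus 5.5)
Your proof is correct. It reaches the same constant as the paper, but the key step is different.

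\textbf{The paper's route.} The paper never builds a fixed cover of $S^*_q$. At each iteration it takes the current unhit set $U\subseteq S^*_q$ and double-counts $\sum_{j\in U}w_j|A^*\cap W_j|=\sum_{t\in A^*}\sum_{j\in U:\,t\in W_j}w_j\ge \tfrac{q}{2}w(U)$. Averaging over the at most $K$ slots of $A^*$, some single slot hits a $\tfrac{q}{2K}$ fraction of $w(U)$. The contraction $(1-\tfrac{q}{2K})^{L_q}$ then gives $1-e^{-1/8}$.

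\textbf{Your route.} You instead prove a covering statement up front. Ranking $A^*$ and taking every $\sigma$-th slot gives a fixed set $T^*$ of fewer than $2K/q\le 8L_q$ points that stabs all of $S^*_q$. After that, the textbook weighted max-coverage bound $1-(1-1/k^*)^{L_q}$ applies.

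\textbf{What each buys.} Your argument uses that $A^*\cap W_j$ is a run of consecutive ranks, i.e.\ that windows are intervals. In return it gives a slightly stronger structural fact: a small hitting set inside $A^*$, and a per-step ratio $1/k^*\ge\sigma/K>q/(2K)$. The final constant is still the same because of the rounding $L_q\ge K/(4q)$. The paper's averaging argument is a bit shorter and does not need the interval structure at this step; it only uses $|A^*\cap W_j|>q/2$ and $|A^*|\le K$.

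One small precision: when $q_{\max}$ is a power of two, its predecessor equals $q/2$ rather than exceeding it. What you actually use, $p_j>q/2$, still holds, since $p_j$ exceeds the predecessor.
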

\begin{proof}
    The algorithm chooses at most \(L_q=\lfloor \frac{K}{2q}\rfloor\) centers and each chosen center opens a block of size \(2q\). 
    Therefore \(|A_q|\le 2q\cdot L_q\le K\). 
    Any job hit by a chosen center is completed by Lemma~\ref{lem:general_unbounded_schedule_jobs}. 

    Let \(S^*_q\subseteq J_q\) be the set of jobs from bucket \(q\) that are completed in the optimal schedule \(A^*\). 
    At any iteration, let \(U\subseteq S^*_q\) be the set of unhit jobs, i.e. jobs whose window contains none of the centers selected so far. 
    Each job \(j\in U\) has \(|A^*\cap W_j|\ge p_j > q/2\) since they are part of the optimal schedule, so the total weight \(\sum_{j\in U}w_j|A^*\cap W_j|\ge q/2\cdot w(U)\). 
    Now \(\sum_{j\in U}w_j\sum_{t\in A^*: t\in W_j}1 = \sum_{t\in A^*}\sum_{j\in U: t\in W_j}w_j\). 
    Since any optimal schedule activates at most \(K\) time slots, this sum is distributed over at most \(K\) time slots. 
    Hence there exists a time slot \(t\in A^*\) whose hit weight is at least the average
    \[
    \max_{t\in A^*} \sum_{j\in U: t\in W_j}w_j \ge \frac{q}{2K}\cdot w(U). 
    \]
    Therefore, there exists a center \(t\in A^*\) that hits at least \(\tfrac{q}{2K}\)-fraction of the remaining optimal weight. 
    Let \(\mathrm{ALG}_i\) be the total weight of jobs of \(J_q\) hit by the first \(i\) centers and let \(U_i\subseteq S^*_q\) be the jobs of \(S^*_q\) still unhit after \(i\) centers. 
    Every job of \(S^*_q\setminus U_i\) has been hit and is therefore counted in \(\mathrm{ALG}_i\), so \(w(U_i)\ge\mathrm{OPT}_q-\mathrm{ALG}_i\). 
    By the averaging bound for \(U_i\) and since the algorithm maximizes the weight of newly hit jobs, \(\mathrm{ALG}_{i+1}-\mathrm{ALG}_i\ge \tfrac{q}{2K}w(U_i)\ge\tfrac{q}{2K}(\mathrm{OPT}_q-\mathrm{ALG}_i)\). 
    Therefore \(\mathrm{OPT}_q-\mathrm{ALG}_{i+1}\le(1-\tfrac{q}{2K})(\mathrm{OPT}_q-\mathrm{ALG}_i)\). 
    Repeating this for all \(L_q=\lfloor\tfrac{K}{2q}\rfloor\ge\tfrac{K}{4q}\) centers gives a total hit weight of at least
    \[
    \left(1 - (1 - \frac{q}{2K})^{L_q}\right)\mathrm{OPT}_q
    \ge \left(1 - (e^{- \frac{q}{2K}})^{\frac{K}{4q}}\right)\mathrm{OPT}_q 
    = \left(1 - e^{- 1/8}\right)\mathrm{OPT}_q .
    \]
\end{proof}

This gives a constant approximation within each bucket of \(\alpha\cdot\mathrm{OPT}_q\) where \(\alpha = 1 - e^{-1/8}\). 
We now consider \(J_{\mathrm{large}}\) where each job has \(p_j>K/2\). 
The block greedy argument does not apply directly since \(L_{\mathrm{large}} = \lfloor\tfrac{K}{2p_j}\rfloor=0\) which means no job is selected, so we handle them separately. 

\begin{lemma}\label{lem:general_unbounded_large_contiguous}
    Let \(S\subseteq J_{\mathrm{large}}\) be feasible using at most \(K\) active slots. 
    There exists a contiguous block \(I\) of \(K\) time slots such that \(|I\cap W_j|\ge p_j\) for every \(j\in S\). 
\end{lemma}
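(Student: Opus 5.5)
The plan is to compress the active set $A$ into a contiguous block around one well-chosen slot. The only place the assumption $p_j>K/2$ is needed is to make sure such a slot lies in every window of $S$.

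Fix an active set $A$ with $|A|=k\le K$ that completes every job in $S$, and write $A=\{a_1<\dots<a_k\}$. Every $j\in S$ satisfies $|A\cap W_j|\ge p_j>K/2\ge k/2$. Since $W_j$ is an interval, $A\cap W_j$ is a set of consecutive indices $\{a_s,\dots,a_{s'}\}$ of length more than $k/2$.

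\textbf{Step 1 (a common slot).} Any run of more than $k/2$ consecutive indices in $\{1,\dots,k\}$ contains the index $\lceil k/2\rceil$. A run avoiding it lies entirely in $\{1,\dots,\lceil k/2\rceil-1\}$ or in $\{\lceil k/2\rceil+1,\dots,k\}$, and both of these have at most $\lfloor k/2\rfloor\le k/2$ elements. Hence the median slot $c=a_{\lceil k/2\rceil}$ lies in $W_j$ for every $j\in S$. This is the step I expect to be the crux, and it is the only point where largeness is used. For $k$ even one must check the parity case: a run of length at least $k/2+1$ contains index $k/2$.

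\textbf{Step 2 (choice of the block).} Let $L=|\{t\in A:t<c\}|$ and $R=|\{t\in A:t>c\}|$, so $L+1+R=k\le K$. Define
\[
I=\{c-L,\dots,c-L+K-1\}.
\]
This block has $K$ slots, extends $L$ slots to the left of $c$, and extends $K-1-L\ge R$ slots to the right of $c$. If $\min_j r_j$ forces slots before time $1$, shifting is harmless, since slots outside all windows do not matter.

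\textbf{Step 3 (verification).} Fix $j\in S$ and let $\ell_j$ and $\rho_j$ be the numbers of slots of $A\cap W_j$ strictly left and strictly right of $c$. Then $\ell_j\le L$, and since those $\ell_j$ distinct slots lie in $[r_j,c-1]$, also $\ell_j\le c-r_j$. Symmetrically, $\rho_j\le R\le K-1-L$ and $\rho_j\le d_j-c$. Because $c\in W_j\cap I$ and both sets are intervals,
\[
|I\cap W_j| = \min\{L,c-r_j\}+1+\min\{K-1-L,d_j-c\}.
\]
Therefore
\[
|I\cap W_j|\ge \ell_j+1+\rho_j=|A\cap W_j|\ge p_j,
\]
which proves the lemma.
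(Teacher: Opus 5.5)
Your proof is correct. It follows the same two-phase outline as the paper: find a slot common to all windows of $S$, then exhibit a $K$-block around that slot that dominates $A$ inside every such window. Both phases, however, are carried out differently.

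\textbf{Common slot.} The paper uses inclusion--exclusion, $|A\cap W_i\cap W_j|\ge |A\cap W_i|+|A\cap W_j|-|A|>0$, to show that the windows of $S$ pairwise intersect, and then invokes the Helly property of intervals to obtain a common $\tau$. You instead observe that each $A\cap W_j$ is a run of more than $k/2$ consecutive elements of the sorted active set, so it must contain the median $a_{\lceil k/2\rceil}$. This gives a common slot in one line, and that slot is moreover active.

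\textbf{The block.} The paper runs an exchange argument. It repeatedly moves an active slot to an inactive slot closer to $\tau$, uses the total distance to $\tau$ as a termination potential, and finally pads the resulting $|A|$-block to $K$ slots. You write the block down directly, with $L$ slots left of $c$ and $K-1-L\ge R$ slots right of $c$. You then verify domination through the closed form $|I\cap W_j|=\min\{L,c-r_j\}+1+\min\{K-1-L,d_j-c\}$.

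\textbf{What each route buys.} Your route is more elementary and fully explicit, and it needs neither Helly nor a termination argument. The paper's route isolates the only consequence of largeness that is used, namely that any two windows of $S$ share an active slot. It then finishes with standard tools.

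The only case you leave implicit is $S=\emptyset$, where $A$ may be empty and $c$ undefined. There any block works trivially.
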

\begin{proof}
    Let \(A\) be a feasible active set of time slots for \(S\) with \(|A|\le K\). 
    For every job \(j\in S\), \(|A\cap W_j|>K/2\). 
    So for any two jobs \(i,j\in S\), the active time slots in their window \(|A\cap W_i|+|A\cap W_j|>K\ge |A|\). 
    By inclusion-exclusion, \(|A\cap W_i\cap W_j|\ge|A\cap W_i|+|A\cap W_j|-|A|>0\), so the windows in \(S\) pairwise intersect. 
    Since intervals satisfy the Helly property, there is a slot \(\tau\) contained in every window \(W_j\) for all \(j\in S\). 

    We only consider relevant time slots near the common time slot \(\tau\). 
    Consider an active time slot \(a\in A\), \(a<\tau\), and there is an inactive time slot \(b\notin A\) with \(a<b\le\tau\). 
    Replace \(a\) by \(b\). 
    If \(a\in W_j\), then \(b\in W_j\), because \(W_j\) is an interval containing both \(a\) and \(\tau\), so the replacement cannot decrease \(|A\cap W_j|\) for any \(j\in S\). 
    If \(a\notin W_j\), then the number of active time slots in the window of job \(j\) does not decrease. 
    The same argument applies symmetrically to an active time slot \(a>\tau\). 
    Each replacement strictly decreases the total distance between \(\tau\) and the active time slots \(a\), so the replacements terminate when we have a contiguous block around \(\tau\) of size \(|A|\le K\) that completes every job in \(S\). 
    Extending this block by adjacent slots until it has exactly \(K\) slots keeps it contiguous and every \(j\in S\) can still be scheduled, preserving feasibility. 
\end{proof}

Lastly, we show how to find the best interval \(I\) of size \(K\). 
Let \(I_s=\{s,s+1,\ldots,s+K-1\}\) be an interval of size \(K\) starting at time slot \(s\). 
Since \(p_j\le\min\{K,|W_j|\}\), a job \(j\in J_{\mathrm{large}}\) satisfies \(|I_s\cap W_j|\ge p_j\) exactly when \(s\in F_j:=[r_j+p_j-K, d_j-p_j+1]\), so the task is to find a point stabbing a maximum-weight subset of intervals \(F_j\). 
Some optimal point is a left endpoint of an \(F_j\), so we sweep over these \(n\) endpoints in \(O(n\log n)\) time. 
By Lemma~\ref{lem:general_unbounded_large_contiguous}, the resulting block has weight of at least \(\mathrm{OPT}_{\mathrm{large}}\). 

\begin{theorem}
    For general intervals with unbounded parallelism, there is an algorithm with running time polynomial in \(n\) and the numeric value of \(K\) returning a schedule of value at least 
    \[
    \mathrm{ALG}\ge \frac{1-e^{-1/8}}{\lceil \log K\rceil+1}\mathrm{OPT}.
    \]
\end{theorem}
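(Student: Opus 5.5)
The plan is to run every subroutine developed above and return the best schedule found. Concretely, for each bucket size \(q\in Q\) we run the greedy center selection on \(J_q\) and obtain \(A_q\). For the large jobs we run the stabbing sweep over the intervals \(F_j\) and obtain a block \(I_s\) of \(K\) slots. Each of these candidate active sets has size at most \(K\), so each is feasible. The algorithm outputs the candidate of maximum completed weight, evaluated against all jobs. Since the objective counts every completed job, the value of a candidate is at least the weight it completes within its own bucket.

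For the approximation ratio, I would use the decomposition \(\mathrm{OPT}=\mathrm{OPT}_{\mathrm{large}}+\sum_{q\in Q}\mathrm{OPT}_q\) stated above. The greedy lemma gives value at least \(\alpha\,\mathrm{OPT}_q\) for each \(q\in Q\), where \(\alpha=1-e^{-1/8}\). Lemma~\ref{lem:general_unbounded_large_contiguous} together with the sweep gives value at least \(\mathrm{OPT}_{\mathrm{large}}\ge\alpha\,\mathrm{OPT}_{\mathrm{large}}\). The number of buckets including \(J_{\mathrm{large}}\) is \(|Q|+1\). Since the maximum is at least the average, \(\mathrm{ALG}\ge \alpha\,\mathrm{OPT}/(|Q|+1)\). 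It remains to check \(|Q|+1\le\lceil\log K\rceil+1\), i.e.\ \(|Q|\le\lceil\log K\rceil\). The powers \(2^i\le\lfloor K/2\rfloor\) are \(i=0,\dots,\lfloor\log_2\lfloor K/2\rfloor\rfloor\), giving at most \(\lfloor\log_2 K\rfloor\) values. Adding \(q_{\max}\), which may duplicate a power of two, gives \(|Q|\le\lfloor\log K\rfloor+1\).

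This count is the step I expect to be delicate. It exceeds \(\lceil\log K\rceil\) exactly when \(K\) is a power of two and \(q_{\max}\) is not counted as a duplicate. But when \(K=2^k\) we have \(q_{\max}=2^{k-1}\in\{2^i\}\), so it is a duplicate. Then \(|Q|=k=\lceil\log K\rceil\), and otherwise \(|Q|\le\lfloor\log K\rfloor+1=\lceil\log K\rceil\). The degenerate case \(K=1\) makes \(Q\) empty, so only the large bucket remains and the bound holds trivially. Jobs with \(p_j>\min\{K,|W_j|\}\) contribute zero to \(\mathrm{OPT}\), so discarding them is harmless.

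For the running time, Lemma~\ref{lem:canonical_time_slots} is not even needed. Each greedy step is a sweep over \(O(n)\) endpoints and there are at most \(K\) steps per bucket. There are \(O(\log K)\) buckets, the large-job sweep takes \(O(n\log n)\), and evaluating each candidate's completed weight takes time polynomial in \(n\) and \(K\). The total is polynomial in \(n\) and \(K\).
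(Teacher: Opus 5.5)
Your proposal is correct and follows the paper's proof: run the greedy for every \(q\in Q\) and the stabbing sweep for \(J_{\mathrm{large}}\), output the best candidate, and use that the maximum of the per-bucket guarantees \(\alpha\,\mathrm{OPT}_q\) and \(\mathrm{OPT}_{\mathrm{large}}\ge\alpha\,\mathrm{OPT}_{\mathrm{large}}\) is at least their average. Your explicit check that \(|Q|\le\lceil\log K\rceil\), using that \(q_{\max}=2^{k-1}\) duplicates a power of two when \(K=2^k\), makes precise the bucket count that the paper only states as \(O(\log K)\). The only slip is that for \(K=1\) the definition gives \(Q=\{0\}\) rather than \(\emptyset\), but that bucket contains no jobs, so your conclusion is unaffected.
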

\begin{proof}
For each regular bucket \(q\), the algorithm finds a schedule of value at least \(\alpha\cdot\mathrm{OPT}_q\) where \(\alpha = 1-e^{-1/8}\). 
For the large bucket, the algorithm finds a schedule of value at least \(\mathrm{OPT}_{\mathrm{large}}\). 
Since there are \(O(\log K)\) buckets in total, the best bucket gives an \(\Omega(1/\log K)\)-fraction of the total value. 
Therefore, there exists some bucket that contains a weight of at least 
\[
\max\Bigl\{\max_q \alpha\,\mathrm{OPT}_q,\ \mathrm{OPT}_{\mathrm{large}}\Bigr\}
\ge \frac{\alpha\,\mathrm{OPT}_{\mathrm{large}}+\alpha\sum_q \mathrm{OPT}_q}{\lceil\log K\rceil+1}
= \frac{\alpha\,\mathrm{OPT}}{\lceil\log K\rceil+1}.
\]
\end{proof}

\section{Proper intervals with parameterized \texorpdfstring{\(m\)}{m}}\label{sec:proper}
Job windows are \emph{proper} when no window strictly contains another, so sorting by nondecreasing release times also sorts the deadlines. 
We index the jobs \(1,\ldots,n\) in this order, breaking ties arbitrarily, so \(i<j\) implies \(r_i\le r_j\) and \(d_i\le d_j\). 
We discard any job with \(p_j>K\), and assume that a schedule processes a job if and only if it completes it. 
We write \(A(t)\subseteq J\) for the set of jobs processed in slot \(t\) with capacity \(|A(t)|\le m\). 

Our main structural lemma is that some optimal schedule splits at most \(2m-1\) jobs across any time slot boundary. 
Scanning time slots form left to right, only \(O(m)\) partially processed jobs have to be remembered explicitly, while other live jobs can be summarized by a single cut index. 
This gives an exact dynamic program running in \((nK)^{O(m)}\) time, which is XP in \(m\) and pseudo-polynomial in \(K\).

\begin{definition}[Inversion pair]\label{def:proper_inversion_pair}
    Consider jobs \(i,j\in J\) with \(i<j\). 
    The pair \((i,j)\) is an \emph{inversion pair} if there are time slots \(t<t'\) such that \(j\in A(t)\) and \(i\notin A(t)\), while \(i\in A(t')\) and \(j\notin A(t')\). 
\end{definition}

\begin{lemma}[Canonical schedule]\label{lem:proper_exchange}
    Every feasible schedule can be transformed, without activating additional time slots and without changing the completed jobs, into a schedule with no inversion pairs. 
    So every proper-interval instance has an optimal schedule with no inversion pairs. 
    We call such a schedule \emph{canonical}. 
\end{lemma}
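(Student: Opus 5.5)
We will prove the lemma by a swap argument driven by a potential function. Fix a feasible schedule and its set of active slots; neither the active set nor the completed set will change. Take any inversion pair \((i,j)\) with \(i<j\), witnessed by slots \(t<t'\) where \(j\in A(t)\setminus A(t')\) and \(i\in A(t')\setminus A(t)\). We swap the two units: process \(i\) in slot \(t\) and \(j\) in slot \(t'\). The load of each slot stays the same, so the capacity \(m\) is still respected. Both slots remain active, and each job still receives the same number of units. So the only thing to check is that the new placements lie inside the job windows.

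Feasibility is where properness enters, and it is the key step. We know \(r_j\le t\) because \(j\) was processed at \(t\), and \(r_i\le r_j\), so \(r_i\le t\). We also know \(t<t'\le d_i\), so \(t\in W_i\). In the same way, \(t'\le d_i\le d_j\) and \(t'>t\ge r_j\), so \(t'\in W_j\). Hence the swap is valid precisely because release times and deadlines are sorted consistently.

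The main obstacle is termination, because one swap can create new inversion pairs with third jobs. We will use the potential
\[
\Phi=\sum_{t}\sum_{k\in A(t)} k\cdot t ,
\]
where \(k\) is the job index. The swap replaces \(j t+i t'\) with \(i t+j t'\). This changes \(\Phi\) by \((j-i)(t'-t)>0\), a strictly positive integer. The quantity \(\Phi\) is bounded above over the finitely many schedules that use the fixed active slots and fixed job multiset. Therefore the process terminates, and when it stops no inversion pair remains.

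For the second statement, start from an optimal schedule. The transformation keeps both the completed jobs (hence the objective) and the active slots (hence the budget \(K\)). So the resulting canonical schedule is also optimal.
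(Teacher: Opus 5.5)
Your proof is correct and follows the paper's argument: you make the same exchange of the unit of \(j\) at \(t\) with the unit of \(i\) at \(t'\), justified by the same chain \(r_i\le r_j\le t<t'\le d_i\le d_j\). The only difference is the termination measure. The paper argues that each swap lexicographically decreases the slot-by-slot sequence of job indices, whereas your potential \(\Phi=\sum_t\sum_{k\in A(t)}k\cdot t\) increases by the positive integer \((j-i)(t'-t)\); this is an equally valid and somewhat more explicit monovariant, and it works regardless of the order in which inversion pairs are resolved.
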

\begin{proof}
    Let \((i,j)\) with \(i<j\) be an inversion pair with time slots \(t<t'\) where \(j\in A(t)\not\ni i\) and \(i\in A(t')\not\ni j\). 
    Both windows contain both slots, i.e \(r_i\le r_j\le t<t'\le d_i\le d_j\). 
    Swap the unit of \(j\) at \(t\) with the unit of \(i\) at \(t'\). 
    Since \(i\not\in A(t)\) and \(j\not\in A(t')\), neither slot gets a duplicate, and both slots are non-empty. 
    Therefore, capacities, windows, the active time slots and the completed jobs are all unchanged. 
    Considering the inversion pairs from left to right and in increasing index order, this decreases the lexicographic sequence of job indices for each swap and therefore the process terminates. 
\end{proof}

We use the following immediate consequence repeatedly. 

\begin{lemma}\label{lem:proper_canonical_monotonicity}
    Consider a canonical schedule and two jobs \(i,j\) with \(i<j\). 
    If \(j\in A(t)\) and \(i\notin A(t)\) for some slot \(t\), then for every later slot \(t'>t\), \(i\in A(t')\) implies \(j\in A(t')\). 
\end{lemma}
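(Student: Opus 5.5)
This follows by contradiction directly from Definition~\ref{def:proper_inversion_pair} and the fact that a canonical schedule (Lemma~\ref{lem:proper_exchange}) has no inversion pairs. Fix a canonical schedule and jobs \(i<j\), and suppose there is a slot \(t\) with \(j\in A(t)\) and \(i\notin A(t)\). Take any later slot \(t'>t\) with \(i\in A(t')\). We must show \(j\in A(t')\).

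Suppose instead that \(j\notin A(t')\). Then the slots \(t<t'\) satisfy all four conditions of Definition~\ref{def:proper_inversion_pair}:
\[
j\in A(t),\qquad i\notin A(t),\qquad i\in A(t'),\qquad j\notin A(t').
\]
So \((i,j)\) with \(i<j\) is an inversion pair. This contradicts canonicity, hence \(j\in A(t')\).

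Nothing needs to be checked about windows, proper intervals, or capacities here. The only care needed is to match the roles of \(i\) and \(j\) and the order \(t<t'\) exactly to the definition.
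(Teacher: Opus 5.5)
Your proposal is correct and follows the paper's own proof: assuming \(j\notin A(t')\) makes the slots \(t<t'\) satisfy exactly the four conditions of Definition~\ref{def:proper_inversion_pair}, so \((i,j)\) would be an inversion pair, which contradicts canonicality. You are also right that this step does not use windows, properness, or capacities.
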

\begin{proof}
    Otherwise \(j\) is present without \(i\) at the earlier time slot \(t\), and \(i\) is present without \(j\) at the later time slot \(t'\). 
    This is exactly an inversion pair \((i,j)\), contradicting canonicality. 
\end{proof}

We next bound the number of jobs that can cross a boundary. 
A job \(j\) is \emph{split} at boundary \(h\) if it is processed both in a slot \(t\le h\) and in a slot \(t\ge h+1\). 
The intuition is that in a canonical schedule, an earlier split job processed after the boundary can force later split jobs to be present in the same post-boundary slot. 
If \(2m\) jobs were split, this would then create a time slot that has to contain \(m+1\) jobs. 

\begin{lemma}\label{lem:proper_few_splits}
    In every canonical schedule, at most \(2m-1\) jobs are split at any boundary \(h\). 
\end{lemma}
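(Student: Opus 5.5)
The plan is a contradiction argument built on Lemma~\ref{lem:proper_canonical_monotonicity}. Suppose a canonical schedule has at least $2m$ jobs split at boundary $h$. Let $s_1<s_2<\dots<s_{2m}$ be $2m$ of them, in index order. Put $L=\{s_1,\dots,s_m\}$ and $R=\{s_{m+1},\dots,s_{2m}\}$, so every job of $L$ has smaller index than every job of $R$. The goal is to exhibit one slot containing at least $m+1$ jobs, which contradicts $|A(t)|\le m$.

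\textbf{Step 1: a pairwise dichotomy.} For split jobs $i<j$, fix any pre-boundary slot $t\le h$ of $j$. Either $i\in A(t)$, or $j$ is present without $i$ at $t$. In the second case Lemma~\ref{lem:proper_canonical_monotonicity} puts $j$ in every slot $t'>t$ containing $i$, in particular in every post-boundary slot of $i$. The symmetric form follows from the definition of inversion pairs. If $i$ is present without $j$ at some post-boundary slot $t'$, then every earlier slot containing $j$ also contains $i$. So for each pair $i<j$ of split jobs, either $j$ accompanies $i$ at all of $i$'s post-boundary slots, or $i$ accompanies $j$ at all of $j$'s pre-boundary slots (or both).

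\textbf{Step 2: choosing a slot.} I would take $y=s_{m+1}$, the smallest job in $R$, and the latest pre-boundary slot $t_y\le h$ of $y$.
\begin{itemize}
\item If every job of $L$ lies in $A(t_y)$, then $A(t_y)$ contains $L\cup\{y\}$, which is $m+1$ jobs, a contradiction.
\item Otherwise, let $i\in L$ be a job missing from $t_y$. Then $y$, and by the same argument every job $j\in R$ that is present at $t_y$, lies in all post-boundary slots of $i$. Take the earliest post-boundary slot $t_x$ of $x=s_m$, or of a suitably chosen job of $L$. Repeating the dichotomy against the jobs of $R$ should show that jobs absent from one side's witness slot are forced into the other's.
\end{itemize}
The intended outcome is that at $t_y$ or at $t_x$ the number of jobs present is at least $|L|+1$ or $|R|+1$.

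\textbf{Main obstacle.} The hard part is the counting in Step 2. The dichotomy is pairwise, and the witness slots depend on the pair, so I need to pick a single slot that absorbs many forced jobs at once. My first attempt would be an extremal choice: pick the smallest-index job $i^\ast$ of $L$ that misses $t_y$. Then use transitivity of the monotonicity lemma along the index order to show that all of $R$ together with $i^\ast$ sits in the earliest post-boundary slot of $i^\ast$, which is $m+1$ jobs. Transitivity means that if $j$ follows $i$ after a point and $k$ follows $j$, then $k$ follows $i$. If that fails, I would instead argue by induction on $m$, peeling off a slot where one split job of each half is forced to appear together and reducing capacity by one. I expect the exact bound $2m-1$, rather than something like $2m$, to come from the single shared job $y$ counted on the boundary between the two halves.
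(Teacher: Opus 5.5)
\textbf{There is a genuine gap: the counting step that finds a single overfull slot is never carried out.}

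Your Step~1 is correct and is exactly the paper's ``crossing rule''. The first part of Step~2 coincides with the paper's base case: either a pre-boundary slot of $s_{m+1}$ already holds $L\cup\{s_{m+1}\}$, or some $i\in L$ is missing there, and then $s_{m+1}$ lies in every post-boundary slot of $i$. Everything after that is only announced (``should show''), and that is where the entire content of the lemma lies.

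Your extremal claim is that $R\cup\{i^\ast\}$ lies in the earliest post-boundary slot $t^\ast$ of $i^\ast$. The tools you cite do not support it. Take a job $j\in R$ absent from $t_y$. The dichotomy for the pair $(i^\ast,j)$ may resolve as ``$i^\ast$ accompanies $j$ at all of $j$'s pre-boundary slots'', which says nothing about $j$ at $t^\ast$. Transitivity would need $j$ to follow some job already known to be at $t^\ast$, such as $s_{m+1}$. That requires a pre-boundary slot containing $j$ but not that job, which you do not have.

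If you push on, what $j\notin A(t^\ast)$ gives you is that every pre-boundary slot $\tau$ of $j$ contains every job of smaller index processed at $t^\ast$. Capacity at $\tau$ then yields a \emph{different} job $x\in L$ missing from $\tau$. The jobs forced together now sit in the post-boundary slots of $x$, not of $i^\ast$. So the witness slot has to move, and a slot fixed in advance is not what the argument delivers.

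Your fallback does not work as stated either. Deleting two split jobs lowers the load only of the slots containing them, so the remaining split jobs need not respect capacity $m-1$.

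\textbf{The missing idea is the paper's induction with a moving witness.} For $i=1,\dots,m$ there is a post-boundary slot $t_i$ containing $s_{m+i}$ and at least $i+1$ jobs of $\{s_1,\dots,s_{m+i}\}$. For the inductive step, fix a pre-boundary slot $\tau$ of $s_{m+i+1}$ and split into three cases.
\begin{itemize}
\item If some job of $\{s_1,\dots,s_{m+i}\}$ processed at $t_i$ is missing from $\tau$, the crossing rule puts $s_{m+i+1}$ into $t_i$.
\item If $\tau$ contains all of $s_{m+1},\dots,s_{m+i+1}$, capacity gives some $x\in L$ missing from $\tau$. Any post-boundary slot of $x$ then contains $x$ and these $i+1$ jobs.
\item Otherwise, take the largest $j\le i$ with $s_{m+j}$ missing from $\tau$. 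The earlier witness $t_j$ then also absorbs $s_{m+j+1},\dots,s_{m+i+1}$, giving $(j+1)+(i+1-j)=i+2$ jobs.
\end{itemize}
At $i=m$ some slot holds $m+1$ jobs, a contradiction. This final slot need only contain $m+1$ of the $2m$ jobs, not all of $R$. So you do not need the stronger single-slot statement you were aiming for.
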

\begin{proof}
    Let the split jobs at boundary \(h\) be \(s_1<s_2<\cdots<s_q\). 
    For a pre-boundary slot \(\tau\le h\), let \(P(\tau)=A(\tau)\cap\{s_1,\ldots,s_q\}\) be the split jobs scheduled at \(\tau\), and for a post-boundary slot \(t\ge h+1\) let \(Q(t)=A(t)\cap\{s_1,\ldots,s_q\}\) be the split jobs scheduled at \(t\). 

    First, we show a consequence of canonicality, similar to Lemma~\ref{lem:proper_canonical_monotonicity}. 
    Let \(\tau\le h<t\). 
    If \(a<b\), \(s_a\in Q(t)\setminus P(\tau)\), and \(s_b\in P(\tau)\), then \(s_b\in Q(t)\). 
    Suppose \(s_b\notin Q(t)\).
    Then the earlier time slot \(\tau\) contains \(s_b\) but not \(s_a\), while the later time slot \(t\) contains \(s_a\) but not \(s_b\). 
    This is an inversion pair, so the schedule was not canonical. 
    We call this the \emph{crossing rule}. 

    For a contradiction, suppose that \(q\ge 2m\). 
    We prove by induction on \(i=1,\ldots,m\) the following claim: there exists a post-boundary slot \(t_i\) such that \(s_{m+i}\in Q(t_i)\) and \(|Q(t_i)\cap\{s_1,\ldots,s_{m+i}\}|\ge i+1\). 

    For \(i=1\), choose a pre-boundary slot \(\tau\) that contains \(s_{m+1}\). 
    Time slot \(\tau\) exists since \(s_{m+1}\) is split. 
    Since \(\tau\) has capacity \(m\) and already contains \(s_{m+1}\), at least one job \(s_a\in\{s_1,\ldots,s_{m}\}\) is absent from \(\tau\). 
    Since \(s_a\) is split, we can choose a post-boundary time slot \(t_1\) that contains a job unit of \(s_a\). 
    By applying the crossing rule to \(s_a<s_{m+1}\), it holds that \(s_{m+1}\in Q(t_1)\). 
    Therefore, \(Q(t_1)\) contains both \(s_a\) and \(s_{m+1}\), which proves the base case. 

    Assume that the claim holds for all values up to \(i\), where \(1\le i<m\). 
    Let \(y=s_{m+i+1}\), and choose a pre-boundary time slot \(\tau\) that contains \(y\). 
    Let \(t_i\) be the witness after the boundary from the induction hypothesis, and define \(R=Q(t_i)\cap\{s_1,\ldots,s_{m+i}\}\). 
    Then \(|R|\ge i+1\). 
    Assume some job \(x\in R\) is not in \(P(\tau)\) with \(y\in P(\tau)\). 
    Since \(x<y\), \(x\in Q(t_i)\setminus P(\tau)\), and \(y\in P(\tau)\), the crossing rule implies \(y\in Q(t_i)\). 
    Since \(Q(t_i)\) contains \(y\) and at least the \(i+1\) jobs of \(R\), it contains at least \(i+2\) jobs from \(\{s_1,\ldots,s_{m+i+1}\}\). 
    Therefore, \(t_{i+1}=t_i\) is a valid witness. 

    We now consider the case \(R\subseteq P(\tau)\) in two subcases. 
    \begin{enumerate}
        \item First suppose that \(P(\tau)\) contains all jobs \(s_{m+1},\ldots,s_{m+i+1}\). 
        These are \(i+1\) jobs.
        Since \(\tau\) has capacity \(m\), it cannot also contain \(s_1,\ldots s_m\). 
        Choose a job \(x\in\{s_1,\ldots,s_m\}\setminus P(\tau)\). 
        Since \(x\) is split, choose a post-boundary time slot \(t\) containing \(x\). 
        For every \(z\in\{s_{m+1},\ldots,s_{m+i+1}\}\), we have \(x<z\), \(x\in Q(t)\setminus P(\tau)\) and \(z\in P(\tau)\). 
        By the crossing rule \(z\in Q(t)\). 
        Therefore, \(Q(t)\) contains \(x\) and all \(i+1\) jobs \(s_{m+1},\ldots,s_{m+i+1}\), so it contains at least \(i+2\) jobs from \(\{s_1,\ldots,s_{m+i+1}\}\), including \(s_{m+i+1}\). 
        \item Suppose that \(P(\tau)\) does not contain all of \(s_{m+1},\ldots,s_{m+i}\). 
        Let \(j\in\{1,\ldots,i\}\) be the maximum value such that \(s_{m+j}\notin P(\tau)\). 
        By maximality of \(j\), every job \(s_{m+j+1},\ldots,s_{m+i+1}\) belongs to \(P(\tau)\). 
        By the induction hypothesis for \(j\), there is a post-boundary time slot \(t_j\) such that \(s_{m+j}\in Q(t_j)\) and \(|Q(t_j)\cap\{s_1,\ldots,s_{m+j}\}|\ge j+1\). 
        Since \(s_{m+j}\notin P(\tau)\), the crossing rule forces every job \(s_{m+j+1},\ldots,s_{m+i+1}\) to also belong to \(Q(t_j)\). 
        Therefore, \(Q(t_j)\) contains at least \((j+1)+(i+1-j)=i+2\) jobs from \(\{s_1,\ldots,s_{m+i+1}\}\), including \(s_{m+i+1}\). 
        This gives the required witness for \(i+1\). 
    \end{enumerate}
    This completes the induction. 
    Taking \(i=m\) gives a post-boundary time slot that must contain at least \(m+1\) jobs, which contradicts the capacity \(m\). 
    Therefore, \(q\le 2m-1\). 
\end{proof}

It remains to describe live jobs that are not split. 
Such a job either is rejected or is processed completely on one side of the boundary. 
Canonicality implies that the accepted jobs of this type are separated by a single index cut. 

\begin{lemma}\label{lem:proper_consecutive_fully_scheduled}
    Fix a boundary \(h\) and a canonical schedule. 
    Let \(S_h\) be the set of jobs that are split at \(h\), and let \(L_h = \{j\in J : r_j\le h<d_j\}\) be the jobs with windows that contain (not necessarily active) slots on both sides of the boundary. 
    Define \(\mathcal{X}_h = L_h\setminus S_h\) as the set of jobs with a window on both sides of boundary \(h\), but not split. 
    Then there is an index \(c\) such that every accepted job \(i\in\mathcal{X}_h\) scheduled entirely before or at \(h\) has index \(i\le c\), and every accepted job \(i\in\mathcal{X}_h\) scheduled entirely after \(h\) has index \(i>c\). 
\end{lemma}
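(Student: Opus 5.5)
We reduce the statement to a pairwise claim and then choose the cut index directly. Call an accepted job of \(\mathcal{X}_h\) \emph{early} if all its units lie in slots \(\le h\), and \emph{late} if all its units lie in slots \(\ge h+1\). Since each accepted job has \(p_j\ge 1\) and is not split, it is exactly one of these. We will show that no early job has a larger index than a late job: if \(i\) is early and \(j\) is late, then \(i<j\).

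Granting this, we set \(c\) to be the largest index of an early job, or \(c=0\) if there is none. Every early job then has index \(\le c\). Every late job has index larger than every early index, and in particular \(>c\). If there are no early jobs this holds trivially, since all indices are \(\ge 1>0\).

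To prove the pairwise claim, we suppose for contradiction that \(i<j\), that \(j\) is early, and that \(i\) is late. We pick any slot \(t\le h\) with \(j\in A(t)\); it exists because \(j\) is accepted and has positive processing time. Since \(i\) is late, \(i\) has no units at slots \(\le h\), so \(i\notin A(t)\). Next we pick any slot \(t'\ge h+1\) with \(i\in A(t')\). Since \(j\) is early, \(j\notin A(t')\). Now \(t<t'\), \(j\in A(t)\not\ni i\), and \(i\in A(t')\not\ni j\). This is exactly an inversion pair \((i,j)\) as in Definition~\ref{def:proper_inversion_pair}, which contradicts canonicality. Equivalently, Lemma~\ref{lem:proper_canonical_monotonicity} applied at \(t\) gives that \(i\in A(t')\) forces \(j\in A(t')\), which is false. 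Equal indices cannot occur, since \(i\) and \(j\) are distinct jobs. Hence \(i<j\).

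There is no real obstacle in this argument. The only points that need care are the degenerate cases: an empty early set or an empty late set, and jobs with \(p_j=0\). A zero-length job is both vacuously early and vacuously late, so we simply exclude such jobs or treat them as trivially placed on either side. Notably, the argument does not use the membership \(L_h\) or the proper-interval structure beyond the canonicality supplied by Lemma~\ref{lem:proper_exchange}.
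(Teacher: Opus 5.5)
Your proof is correct and follows the paper's argument. In both, a late job with a smaller index than an early job gives an inversion pair, using one pre-boundary slot of the early job and one post-boundary slot of the late job, which contradicts canonicality. Your explicit choice of \(c\) as the largest early index (or \(0\) if there is none) simply makes precise the paper's closing remark that the early jobs form a prefix and the late jobs a suffix.
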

\begin{proof}
    Consider the accepted jobs \(j,i\in\mathcal{X}_h\) with \(j<i\), where \(j\) is scheduled entirely after the boundary and \(i\) entirely before the boundary. 
    Let \(t\) be a boundary time slot before the boundary with \(i\in A(t)\) and a boundary slot \(t'>t\) after the boundary with \(j\in A(t')\). 
    Since neither job is split, it holds that \(j\notin A(t)\) and \(i\notin A(t')\). 
    Therefore, \((j,i)\) is an inversion pair, which contradicts canonicality. 
    Any accepted non-split jobs scheduled before the boundary form a prefix, and the jobs scheduled after the boundary form a suffix. 
\end{proof}

The dynamic program maintains this cut index explicitly. 
At each boundary, the DP stores the few jobs that are split across the boundary together with their progress. 
All other live jobs are summarized by the cut index. 
Jobs on the closed side of the cut may not be processed later, while jobs on the open side may still be accepted. 
This means that the DP does not have to remember rejected jobs explicitly. 
Lemma~\ref{lem:proper_few_splits} guarantees that in some optimal canonical, the number of split jobs at every boundary is at most \(2m-1\). 

Since release time and deadline windows can be large, we restrict the time horizon to a smaller number of relevant time slots depending only on \(n\) and \(K\). 
As in Lemma~\ref{lem:canonical_time_slots}, any schedule that uses at most \(K\) active time slots can be transformed into a schedule where every active time slot lies within distance \(K\) of some release time or deadline. 
After this compression and renumbering, the time horizon has \(T=O(nK)\) time slots. 

The dynamic program scans time from left to right. 
Boundary \(h\) is the point immediately after time slot \(h\), for \(h=0,\ldots,T\). 
Recall that \(L_h=\{j\in J:r_j\le h<d_j\}\) is the set of jobs whose windows contain slots on both sides of boundary \(h\). 
At boundary \(h\), the DP state is a tuple \(S,\rho,c,b\). 
Here \(S\subseteq L_h\) is the set of jobs that have been started but not yet completed, and we require \(|S|\le 2m-1\). 
For every \(j\in S\), the value \(\rho(j)\in\{1,\ldots,p_j-1\}\) is the amount of processing already given to \(j\). 
The integer \(b\in\{0,\ldots,K\}\) is the number of active slots used so far. 
Lastly, \(c\in\{0,\ldots,n\}\) is the cut index for live jobs that are not carried in \(S\). 
A live non-carried job \(j\in L_h\setminus S\) with \(j\le c\) is called \emph{closed}, which means it has either already been completed in the prefix, or it has been rejected and may not be processed later. 
A live non-carried job \(j>c\) is still \emph{open} and may be processed in the suffix. 
So the DP does not remember rejected jobs individually, since they are represented by the closed side of the cut. 
The value of a state is the maximum total weight of jobs completed in the prefix among all prefix schedules realizing the state. 
For any schedule that realizes \((S,\rho,c,b)\), the jobs in \(S\) are exactly the jobs with positive unfinished progress after slot \(h\), and every processed job outside \(S\) has already been completed. 
The initial state is \((\emptyset,\emptyset,0,0)\). 

We now describe the transition between boundary \(h-1\) and \(h\). 
Let the previous state be \((S^-,\rho^-,c^-,b^-)\) and \((S^+,\rho^+,c^+,b^+)\) the new state. 
The DP guesses at most \(m\) jobs \(X_h\subseteq\{j\in J: r_j\le h\le d_j\}\) which are scheduled in time slot \(h\). 
The new budget value is \(b^+ = b^- + 1\) if \(X_h\neq\emptyset\) and \(b^+=b^-\) otherwise, and we require that \(b^+\le K\). 
The transition does not process any job that was already closed at the previous boundary. 
Therefore, if \(j\in X_h\setminus S^-\) was live and not carried at boundary \(h-1\), i.e. \(j\in L_{h-1}\setminus S^-\), then we require \(j>c^-\). 
Given the guess \(X_h\), the progress after slot \(h\) is simply updated by \(x_j = \rho^-(j) + 1\) if \(j\in X_h\), and \(x_j = \rho^-(j)\) otherwise.
For any newly started job in \(j\in X_h\setminus S^-\), define \(x_j=1\). 
These are the only jobs that can have positive progress after slot \(h\). 
Let \(C_h=\{j: x_j=p_j\}\) be the set of jobs that complete in slot \(h\) and let \(U_h=\{j:0<x_j<p_j\}\) be the set of unfinished jobs after slot \(h\). 
The unfinished jobs must be carried across the boundary \(h\), so \(S^+=U_h\) and \(\rho^+(j)=x_j\) for all \(j\in S^+\). 
Note that every job in \(U_h\) must belong to \(L_h\), otherwise it has unfinished work without a future time slot in its window which would make the transition infeasible. 
For the cut, any job that completes in slot \(h\) and is still live across boundary \(h\) must be closed immediately. 
So \(j\in C_h\cap L_h\) implies \(j\le c^+\). 
Furthermore, any closed jobs that are still available and non-carried must stay closed, so for every \(j\in L_{h-1}\cap L_h\) with \(j\notin S^-\cup S^+\) and \(j\le c^-\), we require \(j\le c^+\). 
Within these constraints, the DP can choose the cut index \(c^+\) freely. 
Choosing a larger cut simply means rejecting additional open live jobs. 

When all these conditions hold, the transition updates the value by adding the weight of the completed jobs in time slot \(h\). 
\[
\mathrm{DP}_h(S^+,\rho^+,c^+,b^+) = \max\left\{
\mathrm{DP}_h(S^+,\rho^+,c^+,b^+),
\mathrm{DP}_{h-1}(S^-,\rho^-,c^-,b^-) + \sum_{j\in C_h}w_j
\right\}
\]
Therefore, each job contributes weight exactly once, namely when its progress first reaches \(p_j\). 
Note that the DP itself does not enforce canonicality. 
This is not needed, since every DP path describes a feasible schedule. 
Canonicality is only used for completeness, since by Lemma~\ref{lem:proper_exchange}, there is an optimal canonical schedule, and Lemmas~\ref{lem:proper_few_splits} and~\ref{lem:proper_consecutive_fully_scheduled} imply that this schedule is represented by the described states. 

Final states are states at boundary \(T\) with \(S=\emptyset\). 
The optimum value is the maximum DP value among all such final states. 
For each boundary, there are \((nK)^{O(m)}\) choices for \(S\) and \(\rho\), and polynomially many choices for \(c\) and \(b\). 
For each state, the set \(X_h\) has size at most \(m\), so all choices for \(X_h\) can be enumerated in \(n^{O(m)}\) time. 
Since the compressed horizon has \(T=O(nK)\) time slots, the total running time is \((nK)^{O(m)}\).

\begin{theorem}\label{thm:proper_weighted_throughput_xp}
    Weighted throughput maximization with active-time budget \(K\) on proper intervals can be solved in \((nK)^{O(m)}\) time. 
    In particular, the problem is \(\mathrm{XP}\) solvable for parameter \(m\) and polynomial in \(n\) and \(K\).
\end{theorem}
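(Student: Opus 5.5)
The proof splits into soundness, completeness and running time of the dynamic program described above.

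\emph{Soundness.} We show that every path of states from $(\emptyset,\emptyset,0,0)$ to a final state at boundary $T$ with $S=\emptyset$ encodes a feasible schedule of the claimed value. The guessed sets $X_h$ are themselves the schedule: each has size at most $m$, and each job in $X_h$ satisfies $r_j\le h\le d_j$. The budget counter $b$ counts the nonempty $X_h$ and is capped at $K$. The progress values $\rho$ track the processing given to every started job. A job leaves $S$ only when its progress reaches $p_j$, and carried jobs must lie in $L_h$, so no partial job survives past its deadline. The cut constraints guarantee that a completed job still live across the boundary is closed immediately and stays closed. Hence no job is restarted after completion, and its weight is counted exactly once. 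A final state with $S=\emptyset$ therefore corresponds to a schedule with at most $K$ active slots whose completed weight equals the DP value.

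\emph{Completeness.} First apply Lemma~\ref{lem:canonical_time_slots} in its proper-interval form to move all active slots into the compressed horizon of $T=O(nK)$ slots. The moved slots must stay in the same gaps between endpoints, and each gap admits a per-slot relabeling of job sets, so capacities are preserved. Next apply Lemma~\ref{lem:proper_exchange} to obtain an optimal canonical schedule $\sigma$. We then build, by induction on $h$, a DP path realizing $\sigma$:
\begin{itemize}
\item set $X_h=A(h)$;
\item let $S_h$ be the set of jobs split at $h$ with unfinished progress, which has size at most $2m-1$ by Lemma~\ref{lem:proper_few_splits};
\item let $b$ be the number of active slots up to $h$;
\item let $c_h$ be the cut index given by Lemma~\ref{lem:proper_consecutive_fully_scheduled}.
\end{itemize}
The step needing care is consistency of consecutive cuts. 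Jobs already closed at $h-1$ (completed or rejected) must remain $\le c_h$, and jobs processed in slot $h$ that were live and non-carried at $h-1$ must exceed $c_{h-1}$. I would define $c_h$ as the largest index of an accepted job of $\mathcal{X}_h$ scheduled entirely at or before $h$. I would then argue, using Lemma~\ref{lem:proper_canonical_monotonicity}, that rejected jobs and jobs that start later do not violate the required inequalities. Where the lemma gives slack, the transition rule allows $c^+$ to be raised freely, which represents rejecting more open jobs. Rejected jobs are never processed, so closing them is harmless. The remaining point is that a job first processed at $h$ which was live and non-carried at $h-1$ is scheduled entirely after $h-1$, and is therefore on the open side of the cut at $h-1$ by Lemma~\ref{lem:proper_consecutive_fully_scheduled}. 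I expect this cut-consistency argument to be the main obstacle. It may require choosing $c_h$ canonically, for example as the minimal valid cut at each boundary, and checking monotonic compatibility across boundaries.

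\emph{Running time.} Each state consists of a set $S$ with $|S|\le 2m-1$ and progress values bounded by $K$, giving $(nK)^{O(m)}$ choices, together with $c\le n$ and $b\le K$. Each transition enumerates $n^{O(m)}$ choices of $X_h$ and $O(n)$ choices of $c^+$. There are $T=O(nK)$ boundaries, so the total running time is $(nK)^{O(m)}$.
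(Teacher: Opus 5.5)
Your outline follows the paper's proof: the same soundness argument, and completeness via compression together with Lemma~\ref{lem:proper_exchange}, Lemma~\ref{lem:proper_few_splits} and Lemma~\ref{lem:proper_consecutive_fully_scheduled}, with the same state count. The paper only asserts that the optimal canonical schedule induces a valid sequence of transitions. Your proposal stops at exactly that point, so the completeness part is not yet a proof.

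Two of the three cut requirements are single-boundary facts and follow from Lemma~\ref{lem:proper_consecutive_fully_scheduled}, as you indicate:
\begin{itemize}
\item A job processed at \(h\) that was live and non-carried at \(h-1\) is accepted and processed only after \(h-1\), hence \(>c_{h-1}\).
\item A job completing at \(h\) that is still live at \(h\) is accepted and processed only up to \(h\), hence \(\le c_h\).
\end{itemize}
The third requirement is different: a live non-carried job with \(j\le c_{h-1}\) that is still live and non-carried at \(h\) must satisfy \(j\le c_h\). This links two boundaries and also covers rejected jobs. Neither Lemma~\ref{lem:proper_consecutive_fully_scheduled} nor Lemma~\ref{lem:proper_canonical_monotonicity} says anything about it.

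Your first candidate does satisfy this requirement, so no minimal-cut construction is needed. Take \(c_h\) to be the largest index of an accepted job of \(\mathcal{X}_h\) processed only in slots \(\le h\), and \(c_h=0\) if there is none. The missing ingredient is the deadline order of proper windows, not canonicality.
\begin{itemize}
\item If \(c_{h-1}=0\) the requirement is vacuous. Otherwise let \(a\) be the job realizing \(c_{h-1}\), and let \(j\le a\) with \(j\in L_{h-1}\cap L_h\).
\item Since \(j\le a\) implies \(d_j\le d_a\), we get \(d_a\ge d_j>h\). Together with \(r_a\le h-1\), this gives \(a\in L_h\).
\item Because \(a\) is processed only in slots \(\le h-1\), it is not split at \(h\). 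So \(a\) is again an accepted job of \(\mathcal{X}_h\) processed only up to \(h\), and \(c_h\ge a\ge j\).
\end{itemize}
With this observation added, your completeness argument, and hence the theorem, goes through.
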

\begin{proof}
    We first prove soundness. 
    Every DP transition only processes jobs whose windows contain the current slot.
    Furthermore, it processes at most \(m\) jobs in that time slot. 
    The budget variable increases exactly when the slot is active, so any final DP path uses at most \(K\) active time slots. 
    Jobs with unfinished progress are stored explicitly in the carried set \(S\), and a job contributes weight to the objective value only when its progress reaches \(p_j\). 
    The cut condition prevents closed live jobs from being processed later. 
    Therefore, every final state with \(S=\emptyset\) corresponds to a feasible schedule whose value is exactly the DP value. 

    Now we prove completeness. 
    By Lemma~\ref{lem:proper_exchange}, there exists an optimal canonical schedule. 
    In such a schedule, Lemma~\ref{lem:proper_few_splits} implies that at most \(2m-1\) jobs are split across any boundary. 
    These are exactly the jobs stored in \(S\), together with their progress values. 
    Lemma~\ref{lem:proper_consecutive_fully_scheduled} implies that all other live jobs can be summarized by one cut index. 
    Closed jobs lie on one side of the cut and open jobs on the other. 
    Therefore, the optimal canonical schedule induces a valid sequence of DP states and transitions. 
    The DP value is therefore at least the optimum, and soundness gives equality. 
    The running time bound follows from the state transition counts described above. 
\end{proof}

\section{An FPT algorithm for laminar intervals with weighted throughput}\label{sec:nested}
We now consider active-time scheduling when the job windows form a laminar family. 
We assume throughout this section that the family of windows is laminar, i.e. for any two jobs \(i,j\in J\), the windows \(W_i\) and \(W_j\) are either disjoint, or one contains the other. 
We discard every job with \(p_j>K\), since such a job cannot be completed using at most \(K\) active slots. 
Throughout this section, we assume \(p_j\le K\) for all jobs. 
We give an FPT algorithm parameterized in \(K\) and \(m\). 
The main tool that we use is that for each nested family, we can construct all possible capacity profiles of some active time slots over each nested interval in FPT time. 

The laminar family is represented by a rooted forest. 
Each node \(u\) of the forest corresponds to one distinct window \(W_u\).
Let \(J(u)=\{j\in J: W_j=W_u\}\) be the jobs whose window is exactly \(W_u\). 
A node \(v\) is a child of \(u\) if \(W_v\subsetneq W_u\) and there is no node \(z\) such that \(W_v\subsetneq W_z\subsetneq W_u\). 
Therefore, the children of \(u\) are the maximal subwindows contained in \(W_u\). 
By laminarity, the children of a node are pairwise disjoint. 
We add a dummy root \(r\), with \(J(r)=\emptyset\), whose children are the roots of the laminar forest. 
The dummy root has no actual window and is only used to combine the connected components. 

For a non-dummy node \(u\), we define its private region as the time slots in \(W_u\) that are not contained in any child window \(P(u) = W_u\setminus\bigcup_{v\text{ child of }u}W_v\). 
Let \(\lambda_u=|P(u)|\) be the number of private time slots. 
These are the time slots that are available to jobs in \(J(u)\) and to ancestors of \(u\), but not to any descendant of \(u\). 
For the dummy root \(r\), we set \(\lambda_r=0\). 

The DP only needs to remember the remaining capacities of active slots inside a subtree. 
Since at most \(K\) time slots can be active and every slot has capacity \(m\), this information can be stored as a small histogram of at most \(m\) buckets filled up to at most \(K\). 
A \emph{capacity profile} is a vector 
\[
\mathbf{h} = (h_0,h_1,\dots,h_m)\in\mathbb{Z}_{\ge 0}^{m+1},
\]
where \(h_c\) is the number of active slots with exactly \(c\) units of remaining capacity. 
Therefore, \(h_0\) is the number of full active time slots, \(h_m\) is the number of active time slots that are currently empty, and \(\sum_{c=0}^m h_c\) is the total number of active time slots represented by the profile. 

We only consider all profiles using at most \(K\) active time slots
\[
\mathcal{H}_{K,m} = \left\{\mathbf{h}\in\mathbb{Z}_{\ge 0}^{m+1} : \sum_{c=0}^{m}h_c\le K\right\}. 
\]
Therefore, the number of profiles that we consider is \(|\mathcal{H}_{K,m}| = \binom{K+m+1}{m+1}\) by a standard stars-and-bars argument with a total sum at most \(K\). 
For a node \(u\) and a profile \(\mathbf{h}\in\mathcal{H}_{K,m}\), define \(\mathrm{DP}_u[\mathbf{h}]\) as the maximum total weight of accepted jobs whose windows are contained in \(W_u\), subject to the condition that after scheduling, these accepted jobs using only slots inside \(W_u\), the active slots in the subtree of \(u\) have residual-capacity profile \(\mathbf{h}\). 
If no such schedule exists, we set \(\mathrm{DP}_u[\mathbf{h}]=-\infty\). 
The important point is that the precise identities of the active slots are irrelevant to ancestors of \(u\). 
Every ancestor job can use every slot inside \(W_u\). 
Therefore, for the purpose of future scheduling decisions, it is enough to know how many active time slots have each remaining capacity. 
In particular, empty active slots are kept in the profile, since they may later be used by jobs corresponding to ancestor of \(u\). 
The histogram also preserves the restriction that a job uses at most one unit in any single slot. 
When a job of size \(p_j\) is scheduled, the transition chooses \(p_j\) distinct slots from the profile and decreases each chosen slot by one unit. 

At node \(u\), before scheduling the jobs in \(J(u)\), we may reserve some slots in the private region \(P(u)\). 
These reserved time slots are potential active slots that can be used by jobs at node \(u\) or by ancestor jobs. 
Since the global budget is \(K\), it is never useful to open more than \(K\) such slots, and of course we cannot reserve more than \(\lambda_u\) private slots. 
For \(a\in\{0,\dots,\min\{K,\lambda_u\}\}\), reserving \(a\) private slots gives the profile \(\mathbf{e}^{(a)} = (0,0,\ldots,0,a)\), where the entry \(a\) is in coordinate \(m\). 
That is, these \(a\) slots each have remaining capacity \(m\). 
If a reserved slot is never used by any job, it can be deleted from the final schedule. 
Counting it in the profile is only a conservative way to enforce that at most \(K\) slots are ever reserved. 

We process the laminar tree bottom-up. 
Suppose the children of \(u\) are \(v_1,\dots,v_q\) with \(q\le n\). 
First we combine the DP tables of the children and the reserved private slots of \(u\). 
Initialize the initial DP table \(M^{(0)}\) by setting \(M^{(0)}[\mathbf{e}^{(a)}]=0\) for all \(a\in\{0,\dots,\min\{K,\lambda_u\}\}\), and setting all other entries to \(-\infty\). 
For each child \(v_i\) with \(i=1,\dots,q\), construct \(M^{(i)}\) from \(M^{(i-1)}\) as follows. 
For every pair of profiles \(\mathbf{a},\mathbf{h}\in\mathcal{H}_{K,m}\) with the previous DP table \(M^{(i-1)}[\mathbf{a}]>-\infty\), and a non-negative child value \(\mathrm{DP}_{v_i}[\mathbf{h}]>-\infty\), and \(\mathbf{s}=\mathbf{a}+\mathbf{h}\in\mathcal{H}_{K,m}\) being the combined valid profile, update 
\[
M^{(i)}[\mathbf{s}] = \max\left\{M^{(i)}[\mathbf{s}], M^{(i-1)}[\mathbf{a}]+\mathrm{DP}_{v_i}[\mathbf{h}]\right\}. 
\]
This DP avoids enumerating all tuples of child profiles explicitly, since \(M^{(i)}\) stores the best value after combining the first \(i\) children. 
After all children have been combined, \(M^{(q)}\) describes the best schedules inside the child subtrees of \(u\), together with any private slots opened in \(P(u)\). 
No job in \(J(u)\) has been scheduled yet. 

We now process the jobs in \(J(u)\) one by one. 
Let the current table be \(T\), initially equal to \(M^{(q)}\). 
For a job \(j\in J(u)\), we construct a new table \(T^{new}\). 
For each job, there are two choices. 
\begin{itemize}
    \item Reject job \(j\). 
    The profile is unchanged and no weight is gained, so we set \(T^{new}[\mathbf{h}]=\max\{T^{new}[\mathbf{h}],T[\mathbf{h}]\}\) for every \(\mathbf{h}\) with \(T[\mathbf{h}]>-\infty\). 
    \item Accept job \(j\). 
    If job \(j\) is accepted, we must place \(p_j\) units of \(j\) into \(p_j\) distinct active slots inside \(W_u\). 
    Let \(\mathbf{h}=(h_0,h_1,\dots,h_m)\) be the current residual-capacity profile. 
    To schedule job \(j\), choose integers \(a_1,\dots,a_m\) where \(a_c\) is the number of slots with residual capacity \(c\) that receive one unit of job \(j\). 
    The choice is feasible if \(0\le a_c\le h_c\) for all \(c\in\{1,\dots,m\}\), and \(\sum_{c=1}^{m}a_c=p_j\). 
    After placing job \(j\), the selected slots each lose one unit of remaining capacity. 
    The resulting profile \(\mathbf{h'}\) is given by \(h'_0=h_0+a_1\), and for \(1\le c\le m\), \(h'_c=h_c-a_c+a_{c+1}\), where \(a_{m+1}=0\). 
    Then we update \(T^{new}[\mathbf{h'}] = \max\left\{T^{new}[\mathbf{h'}], T[\mathbf{h}] + w_j\right\}\).
\end{itemize}

After both cases are applied, we replace \(T\) by \(T^{new}\) and continue with the next job of \(J(u)\). 
Once all jobs of \(J(u)\) are processed, we set \(\mathrm{DP}_u=T\). 
After processing the dummy root \(r\), the optimum value is \(\max_{\mathbf{h}\in\mathcal{H}_{K,m}}\mathrm{DP}_r[\mathbf{h}]\).
Every profile in \(\mathcal{H}_{K,m}\) uses at most \(K\) active slots, so the active-time budget is respected. 

\begin{theorem}
    Weighted-throughput active-time scheduling on laminar intervals can be solved exactly in \(f(K,m)\cdot n^{O(1)}\) time. 
\end{theorem}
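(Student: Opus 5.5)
The proof has three parts: soundness of the bottom-up DP, completeness with respect to an optimal schedule, and the running-time bound. All three go by induction over the laminar forest, processing nodes in post-order.

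\emph{Soundness.} The inductive claim is that every finite entry \(\mathrm{DP}_u[\mathbf{h}]\) is realized by a concrete schedule of accepted jobs with windows inside \(W_u\). This schedule uses only slots in \(W_u\), respects capacity \(m\), places at most one unit of any job in any slot, and its active slots have residual profile exactly \(\mathbf{h}\).
\begin{itemize}
\item For the merge step, the children's windows are pairwise disjoint and disjoint from \(P(u)\). Concatenating realizing schedules for the children with \(a\) reserved slots of \(P(u)\) (which requires \(a\le\lambda_u\)) therefore yields a schedule whose profile is the coordinatewise sum.
\item For accepting a job \(j\in J(u)\), the vector \((a_1,\dots,a_m)\) selects \(\sum_c a_c=p_j\) distinct slots, all with positive residual capacity and all lying in \(W_u=W_j\). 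Assigning one unit of \(j\) to each of them produces exactly \(\mathbf{h}'\).
\end{itemize}
Since every profile in \(\mathcal{H}_{K,m}\) has total at most \(K\), at the root we obtain a feasible schedule with at most \(K\) active slots, after deleting unused reserved slots, and its value equals the table entry.

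\emph{Completeness.} Fix an optimal schedule \(\sigma\) with active set \(A\), \(|A|\le K\), and accepted set \(S^*\). For each node \(u\), let \(\sigma_u\) be the restriction of \(\sigma\) to the jobs of \(S^*\) whose windows lie in \(W_u\). The key observation is that every unit of such a job lies in a slot of \(W_u\). Let \(A_u=A\cap W_u\), and let \(\mathbf{h}^u\) be the residual profile of \(A_u\) under \(\sigma_u\).

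I prove by induction that \(\mathrm{DP}_u[\mathbf{h}^u]\ge w(S^*\cap\{j:W_j\subseteq W_u\})\).
\begin{itemize}
\item \(A_u\) is the disjoint union of the sets \(A_{v_i}\) and \(A\cap P(u)\), with \(|A\cap P(u)|\le\min\{K,\lambda_u\}\). The merged profile before processing \(J(u)\) is therefore \(\mathbf{e}^{(a)}+\sum_i\mathbf{h}^{v_i}\), which lies in \(\mathcal{H}_{K,m}\) because \(|A_u|\le K\).
\item I then process the jobs of \(J(u)\) in the DP's order, accepting exactly those in \(S^*\), and place each accepted job on the slots that \(\sigma\) gives it. Its effect on the profile is the histogram of the residual capacities of those slots at that moment, so it is a legal choice of the \(a_c\).
\item The subtle point is that \(\sigma\) restricted to \(W_u\) may interleave the jobs of \(J(u)\) with jobs of descendants in the same slot. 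This is harmless because residual capacity is additive: the final profile depends only on how many units each slot receives, not on the order in which they arrive. Ancestors' jobs are not yet counted, so \(\mathbf{h}^u\) is exactly the profile reached.
\end{itemize}
This step, arguing that the profile abstraction loses nothing for ancestors, is the main obstacle. The point I must make explicit is that every ancestor window contains all of \(W_u\), so an ancestor job cares only about the multiset of residual capacities of slots in \(W_u\), never about their positions. At the dummy root the bound gives \(\max_{\mathbf{h}}\mathrm{DP}_r[\mathbf{h}]\ge \mathrm{OPT}\).

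\emph{Running time.} Let \(H=|\mathcal{H}_{K,m}|=\binom{K+m+1}{m+1}\).
\begin{itemize}
\item Each child merge costs \(O(H^2)\).
\item Each job costs at most \(H\cdot(K+1)^m\), enumerating the vectors \((a_c)\).
\item The forest has at most \(n+1\) nodes and the total number of child merges is at most \(n\).
\end{itemize}
The total is therefore \(O(n\cdot(H^2+H(K+1)^m))\), plus polynomial time to build the laminar forest. The values \(\lambda_u\) can be computed arithmetically from the endpoints, so the numeric horizon never appears. This gives \(f(K,m)\cdot n^{O(1)}\) with \(f(K,m)=\binom{K+m+1}{m+1}^{2}(K+1)^m\).
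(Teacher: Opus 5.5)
Your proposal is correct and takes essentially the same route as the paper. The structure matches: soundness by realizing each finite DP entry as a concrete schedule, completeness by induction on the laminar tree showing that an optimal schedule's residual profile on \(W_u\) is reached with at least its weight, and the same \(O\bigl(n\cdot(|\mathcal{H}_{K,m}|^2+|\mathcal{H}_{K,m}|(K+1)^m)\bigr)\) running-time count. You also make explicit two points the paper leaves implicit: residual capacity is order-independent, so interleaving units of \(J(u)\) with descendant units in a slot is harmless, and \(\lambda_u\) can be computed arithmetically.
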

\begin{proof}
    We prove correctness and then analyze the running time. 
    We first show that every solution represented by the DP corresponds to a feasible active-time schedule. 
    At each node \(u\), the DP reserves some number of slots in the private region \(P(u)\). 
    It never reserves more than \(\lambda_u=|P(u)|\) such slots, so these slots can be chosen as actual time slots in \(P(u)\). 
    The DP also combines slots already constructed in child subtrees. 
    Since child windows are pairwise disjoint, combining child profiles simply takes the disjoint union of their active time slots. 

    When the DP accepts a job \(j\in J(u)\), it selects \(p_j\) active time slots with positive residual capacity and decreases the residual capacity of each selected slot by one. 
    Therefore no active slot receives more than \(m\) jobs in total. 
    Because the \(p_j\) selected slots are distinct, job \(j\) receives at most one unit of processing in any single time slot. 

    All time slots used for job \(j\) lie inside the subtree of \(u\), and therefore inside \(W_u=W_j\). 
    Therefore, job \(j\) is scheduled only inside its own window. 
    Finally, every accepted job has received exactly \(p_j\) units of processing time, and rejected jobs contribute no weight. 
    Therefore the DP constructs a feasible schedule with a total objective value weight that is exactly the value stored in the DP table. 

    We now show that every feasible schedule is represented by the DP. 
    Consider any feasible schedule using at most \(K\) active slots. 
    We show by induction on the laminar tree that the DP can reproduce its behavior. 
    Fix a node \(u\). 
    The active slots inside the child subtrees of \(u\) are represented by the DP tables of the children by the induction hypothesis. 
    The active slots of the schedule in the private region \(P(u)\) are represented by the initialization step that opens private slots. 
    Since the schedule uses at most \(K\) active time slots globally, the DP allows enough private slots to represent them. 
    Combining the children and the private slots gives exactly the residual-capacity profile that remains before scheduling the jobs in \(J(u)\). 
    Now consider a job \(j\in J(u)\). 
    If \(j\) is not completed in the schedule, the DP can reject it. 
    If \(j\) is completed, then the schedule assigns its \(p_j\) units to \(p_j\) distinct slots inside \(W_u\). 
    At the moment before accounting for \(j\), each of those slots has positive remaining capacity. 
    Therefore, the DP can choose the corresponding vector \((a_1,\dots,a_m)\), where one unit from each of those slots is consumed, and add weight \(w_j\). 
    Processing the jobs of \(J(u)\) in this way reproduces the decision at node \(u\) of the schedule. 
    By induction, the DP can represent every feasible schedule. 
    Hence the value computed at the dummy root is at least the optimum schedule value. 
    Soundness gives the reverse inequality, so the DP computes the optimum weighted throughput. 

    For the running time, the number of capacity profiles is \(|\mathcal{H}_{K,m}| = \binom{K+m+1}{m+1}\). 
    Combining one child table with the current table requires considering pairs of profiles, and therefore takes \(O(|\mathcal{H}_{K,m}|^2)\) time. 
    Over all nodes, the total number of child-combinations is \(O(n)\). 
    For each job \(j\), and for each profile \(\mathbf{h}\), the DP enumerates all feasible job allocation vectors \((a_1,\dots,a_m)\) with \(\sum_{c=1}^{m}a_c=p_j\). 
    The number of such vectors is at most \(A_{K,m}\le (K+1)^m\), since \(p_j\le K\). 
    Thus all job-processing transitions take \(O(n\cdot |\mathcal{H}_{K,m}|\cdot A_{K,m})\) time. 
    Therefore, the total running time is 
    \[
    O\left(n\cdot|\mathcal{H}_{K,m}|^2 + n\cdot |\mathcal{H}_{K,m}|\cdot A_{K,m}\right) = f(K,m)\cdot n^{O(1)}.
    \] 
    This proves that the problem is FPT parameterized by \(K\) and \(m\). 
\end{proof}

\section{Conclusion and Future Work}
We showed that the weighted-throughput version of active-time scheduling is already nontrivial even when capacity is unbounded. 
In the unbounded capacity setting, the problem is NP-hard and admits no FPTAS unless \(\mathrm{P}=\mathrm{NP}\), but it has an \(\Omega(1/\log K)\)-approximation. 
For more structured windows, we gave a pseudo-polynomial \((nK)^{O(m)}\)-time algorithm for proper intervals, and an FPT algorithm for laminar intervals parameterized by \(K\) and \(m\). 

Several questions remain open. 
The most immediate one is whether the \(\Omega(1/\log K)\)-approximation for unbounded capacity can be improved. 
It would also be interesting to understand whether proper or laminar intervals admit polynomial-time algorithms, or fixed-parameter algorithms under smaller parameter sets.

\printbibliography 

@article{DBLP:journals/corr/abs-2112-03255,
  author       = {Sagnik Saha and
                  Manish Purohit},
  title        = {NP-completeness of the Active Time Scheduling Problem},
  journal      = {CoRR},
  volume       = {abs/2112.03255},
  year         = {2021},
  url          = {https://arxiv.org/abs/2112.03255},
  eprinttype    = {arXiv},
  eprint       = {2112.03255},
  bibsource    = {dblp computer science bibliography, https://dblp.org}
}

@article{DBLP:journals/siamcomp/PapadimitriouY79,
  author       = {Christos H. Papadimitriou and
                  Mihalis Yannakakis},
  title        = {Scheduling Interval-Ordered Tasks},
  journal      = {{SIAM} J. Comput.},
  volume       = {8},
  number       = {3},
  pages        = {405--409},
  year         = {1979},
  url          = {https://doi.org/10.1137/0208031},
  doi          = {10.1137/0208031},
  bibsource    = {dblp computer science bibliography, https://dblp.org}
}

@article{DBLP:journals/disopt/KubiakRP09,
  author       = {Wieslaw Kubiak and
                  Djamal Rebaine and
                  Chris N. Potts},
  title        = {Optimality of {HLF} for scheduling divide-and-conquer {UET} task graphs
                  on identical parallel processors},
  journal      = {Discret. Optim.},
  volume       = {6},
  number       = {1},
  pages        = {79--91},
  year         = {2009},
  url          = {https://doi.org/10.1016/j.disopt.2008.09.001},
  doi          = {10.1016/J.DISOPT.2008.09.001},
  bibsource    = {dblp computer science bibliography, https://dblp.org}
}

@article{DBLP:journals/algorithmica/ChangGK14,
  author       = {Jessica Chang and
                  Harold N. Gabow and
                  Samir Khuller},
  title        = {A Model for Minimizing Active Processor Time},
  journal      = {Algorithmica},
  volume       = {70},
  number       = {3},
  pages        = {368--405},
  year         = {2014},
  url          = {https://doi.org/10.1007/s00453-013-9807-y},
  doi          = {10.1007/S00453-013-9807-Y},
  bibsource    = {dblp computer science bibliography, https://dblp.org}
}

@article{DBLP:journals/scheduling/ChangKM17,
  author       = {Jessica Chang and
                  Samir Khuller and
                  Koyel Mukherjee},
  title        = {{LP} rounding and combinatorial algorithms for minimizing active and
                  busy time},
  journal      = {J. Sched.},
  volume       = {20},
  number       = {6},
  pages        = {657--680},
  year         = {2017},
  url          = {https://doi.org/10.1007/s10951-017-0531-3},
  doi          = {10.1007/S10951-017-0531-3},
  bibsource    = {dblp computer science bibliography, https://dblp.org}
}

@inproceedings{DBLP:conf/isaac/CaoFLMRU22,
  author       = {Nairen Cao and
                  Jeremy T. Fineman and
                  Shi Li and
                  Juli{\'{a}}n Mestre and
                  Katina Russell and
                  Seeun William Umboh},
  editor       = {Sang Won Bae and
                  Heejin Park},
  title        = {Nested Active-Time Scheduling},
  booktitle    = {33rd International Symposium on Algorithms and Computation, {ISAAC}
                  2022, December 19-21, 2022, Seoul, Korea},
  series       = {LIPIcs},
  volume       = {248},
  pages        = {36:1--36:16},
  publisher    = {Schloss Dagstuhl - Leibniz-Zentrum f{\"{u}}r Informatik},
  year         = {2022},
  url          = {https://doi.org/10.4230/LIPIcs.ISAAC.2022.36},
  doi          = {10.4230/LIPICS.ISAAC.2022.36},
  bibsource    = {dblp computer science bibliography, https://dblp.org}
}

@inproceedings{DBLP:conf/spaa/KumarK18,
  author       = {Saurabh Kumar and
                  Samir Khuller},
  editor       = {Christian Scheideler and
                  Jeremy T. Fineman},
  title        = {Brief Announcement: {A} Greedy 2 Approximation for the Active Time
                  Problem},
  booktitle    = {Proceedings of the 30th on Symposium on Parallelism in Algorithms
                  and Architectures, {SPAA} 2018, Vienna, Austria, July 16-18, 2018},
  pages        = {347--349},
  publisher    = {{ACM}},
  year         = {2018},
  url          = {https://doi.org/10.1145/3210377.3210659},
  doi          = {10.1145/3210377.3210659},
  bibsource    = {dblp computer science bibliography, https://dblp.org}
}

@article{DBLP:journals/scheduling/CalinescuW21,
  author       = {Gruia C{\u{a}}linescu and
                  Kai Wang},
  title        = {A new {LP} rounding algorithm for the active time problem},
  journal      = {J. Sched.},
  volume       = {24},
  number       = {5},
  pages        = {543--552},
  year         = {2021},
  url          = {https://doi.org/10.1007/s10951-020-00676-1},
  doi          = {10.1007/S10951-020-00676-1},
  bibsource    = {dblp computer science bibliography, https://dblp.org}
}

@article{DBLP:journals/cacm/Albers10,
  author       = {Susanne Albers},
  title        = {Energy-efficient algorithms},
  journal      = {Commun. {ACM}},
  volume       = {53},
  number       = {5},
  pages        = {86--96},
  year         = {2010},
  url          = {https://doi.org/10.1145/1735223.1735245},
  doi          = {10.1145/1735223.1735245},
  bibsource    = {dblp computer science bibliography, https://dblp.org}
}

@inproceedings{DBLP:conf/birthday/ChauL20,
  author       = {Vincent Chau and
                  Minming Li},
  editor       = {Ding{-}Zhu Du and
                  Jie Wang},
  title        = {Active and Busy Time Scheduling Problem: {A} Survey},
  booktitle    = {Complexity and Approximation - In Memory of Ker-I Ko},
  series       = {Lecture Notes in Computer Science},
  volume       = {12000},
  pages        = {219--229},
  publisher    = {Springer},
  year         = {2020},
  address={Cham},
  url          = {https://doi.org/10.1007/978-3-030-41672-0\_13},
  doi          = {10.1007/978-3-030-41672-0\_13},
  bibsource    = {dblp computer science bibliography, https://dblp.org}
}

@article{DBLP:journals/scheduling/DemaineGHSZ13,
  author       = {Erik D. Demaine and
                  Mohammad Ghodsi and
                  MohammadTaghi Hajiaghayi and
                  Amin S. Sayedi{-}Roshkhar and
                  Morteza Zadimoghaddam},
  title        = {Scheduling to minimize gaps and power consumption},
  journal      = {J. Sched.},
  volume       = {16},
  number       = {2},
  pages        = {151--160},
  year         = {2013},
  doi          = {10.1007/S10951-012-0309-6},
  bibsource    = {dblp computer science bibliography, https://dblp.org}
}

@inproceedings{DBLP:conf/soda/WinklerZ03,
author = {Winkler, Peter and Zhang, Lisa},
title = {Wavelength assignment and generalized interval graph coloring},
year = {2003},
isbn = {0898715385},
publisher = {Society for Industrial and Applied Mathematics},
address = {USA},
booktitle = {Proceedings of the Fourteenth Annual ACM-SIAM Symposium on Discrete Algorithms},
pages = {830–831},
numpages = {2},
location = {Baltimore, Maryland},
series = {SODA '03}
}

@article{DBLP:journals/tcs/ShalomVWYZ14,
  author       = {Mordechai Shalom and
                  Ariella Voloshin and
                  Prudence W. H. Wong and
                  Fencol C. C. Yung and
                  Shmuel Zaks},
  title        = {Online optimization of busy time on parallel machines},
  journal      = {Theor. Comput. Sci.},
  volume       = {560},
  pages        = {190--206},
  year         = {2014},
  url          = {https://doi.org/10.1016/j.tcs.2014.07.017},
  doi          = {10.1016/J.TCS.2014.07.017},
  bibsource    = {dblp computer science bibliography, https://dblp.org}
}

@inproceedings{DBLP:conf/wads/ChauFLWZ019,
  author       = {Vincent Chau and
                  Shengzhong Feng and
                  Minming Li and
                  Yinling Wang and
                  Guochuan Zhang and
                  Yong Zhang},
  editor       = {Zachary Friggstad and
                  J{\"{o}}rg{-}R{\"{u}}diger Sack and
                  Mohammad R. Salavatipour},
  title        = {Weighted Throughput Maximization with Calibrations},
  booktitle    = {Algorithms and Data Structures - 16th International Symposium, {WADS}
                  2019, Edmonton, AB, Canada, August 5-7, 2019, Proceedings},
  series       = {Lecture Notes in Computer Science},
  volume       = {11646},
  pages        = {311--324},
  publisher    = {Springer},
  year         = {2019},
  url          = {https://doi.org/10.1007/978-3-030-24766-9\_23},
  doi          = {10.1007/978-3-030-24766-9\_23},
  bibsource    = {dblp computer science bibliography, https://dblp.org}
}

@article{DBLP:journals/disopt/LampisKM11,
  author       = {Michael Lampis and
                  Georgia Kaouri and
                  Valia Mitsou},
  title        = {On the algorithmic effectiveness of digraph decompositions and complexity
                  measures},
  journal      = {Discret. Optim.},
  volume       = {8},
  number       = {1},
  pages        = {129--138},
  year         = {2011},
  url          = {https://doi.org/10.1016/j.disopt.2010.03.010},
  doi          = {10.1016/J.DISOPT.2010.03.010},
  bibsource    = {dblp computer science bibliography, https://dblp.org}
}

\appendix

\section{Interval-ordered chains and NP-hardness}\label{sec:hardness}
The input for IOC consists of \(n\) unit-time jobs partitioned into chains \(S_1,\dots, S_\ell\). 
Within each chain \(S_i\), the jobs are totally ordered. 
In addition, the chains themselves are partially ordered by an interval order. 
That is, each chain \(S_i\) is represented by a closed interval \(I(S_i)\) on the real line, and \((S_i,S_{i'})\) is a precedence constraint exactly when \(\sup I(S_i) < \inf I(S_{i'})\). 
Such a chain-level precedence constraint applies to all jobs in the two chains, i.e. if \((S_i,S_{i'})\) is present, then every job of \(S_i\) precedes every job of \(S_{i'}\). 
This model (IOC) can be denoted in three-field notation as \(P\mid ioc, p_j=1\mid C_{\max}\). 

The IOC model strictly generalizes both interval orders and chains, since singleton chains recover interval-ordered unit jobs, while parallel chains recover the chain setting. 
Both interval orders and chains (and the more general divide-and-conquer precedence graphs) are polynomial time solvable using simple greedy procedures \cite{DBLP:journals/siamcomp/PapadimitriouY79,DBLP:journals/disopt/KubiakRP09}. 
The interesting part here is that, using a similar construction as Saha and Purohit for the active-time scheduling hardness \cite{DBLP:journals/corr/abs-2112-03255}, we can show that IOC is in fact NP-hard, which we show in Appendix~\ref{sec:hardness}. 
IOC is a relaxation of active-time scheduling because active-time scheduling enforces strict release times and deadlines. 
Allowing arbitrarily long overlap between the release-time and deadline windows recovers the IOC model. 
Our paper focuses on the more general active-time scheduling model, but we believe the IOC model might be of independent interest. 
We prove that makespan minimization for interval-ordered chains with unit-time jobs (\(P|ioc,p_j=1|C_{max}\)) is NP-hard. 
Our construction closely follows the time slot layout and gadget ideas of the NP-hardness proof shown by Saha and Purohit for the active time scheduling problem \cite{DBLP:journals/corr/abs-2112-03255}. 
However, we need to encode these gadgets as interval-ordered unit chains, which does not directly follow from the active time scheduling instance. 
We refer many times to the construction of Saha and Purohit, which all refer to \cite{DBLP:journals/corr/abs-2112-03255} in this section. 
All gadgets we define are derived from the gadgets by Saha and Purohit. 
Following Saha and Purohit, we use a reduction from Balanced SAT. 

\begin{definition}[Balanced SAT, \cite{DBLP:journals/corr/abs-2112-03255}]
    Given is a boolean formula \(F(x_1,x_2,\dots,x_n)\) where the number of variables \(n\) is even. 
    \(F\) is presented in CNF (AND of ORs) and contains \(k\) clauses \(C_1,\dots,C_k\), where each clause \(C_i\) is an OR of several literals. 
    The problem asks whether there is a satisfiable assignment of the boolean variables that sets exactly \(n/2\) variables to \emph{true}. 
\end{definition}

We reduce an instance of Balanced SAT to the decision variant of an instance of \(P|ioc,p_j=1|C_{max}\) with a threshold \(\tau\) such that the SAT formula is satisfiable and balanced if and only if there is a feasible schedule of makespan at most \(\tau\). 
We use the set of time slots introduced by Saha and Purohit for the active time scheduling problem for our precedence constraint setting and give a self-contained reduction from Balanced SAT to the decision version of scheduling interval-ordered chains. 

Let \(F\) be a balanced SAT formula on \(n\) variables \(x_1,\dots,x_n\) and clauses \(C_1,\dots,C_k\). 
Let \(n_i\) be the number of literals in clause \(C_i\) and let \(n_x\) denote the number of occurrences of variable \(x\) in formula \(F\). 
The target makespan is set to \(\tau := k + 2 + \sum^k_{i=1}n_i\). 
The number of machines is set to \(m := 2n+2\)
We define the set of unit-length time slots along a line the same way as Saha and Purohit \cite{DBLP:journals/corr/abs-2112-03255}. 
\begin{itemize}
    \item Define a single left boundary slot \(L\). 
    \item For each clause \(C_i\) in order \(1,\dots,k\), first define two time slots corresponding to the negated literal occurrences of \(C_i\), then a clause time slot \(C_i\), followed by the time slots for the positive literal occurrences of \(C_i\). 
    \item Finally, define the right boundary time slot \(R\). 
\end{itemize}
An overview of an example reduction is shown in Figure~\ref{fig:NP_typical_example}. 
We define three gadgets: a \emph{variable} gadget, a \emph{clause} gadget, and a \emph{copy} gadget. 
The interval-ordered precedence constraints are constructed by the intervals in the real line. 
The intervals span several time slots, which represent different parts of the reduction. 
\begin{figure}
    \centering
    \includegraphics[page=3]{images/active_time.pdf}
    \caption{A typical translation of a balanced SAT instance with \(n=4\) variables \(F=(\overline{x_1}\vee\overline{x_2}\vee x_3)\wedge(\overline{x_3}\vee x_4)\wedge(x_1\vee x_3)\). 
    This example is the same as the Saha and Purohit example \cite{DBLP:journals/corr/abs-2112-03255}. 
    The numbers indicate the length of the chain in the set. 
    The dashed lines represent the \emph{variable} gadgets, the dotted lines represent the \emph{clause} gadgets, and the lines below the boxes represent the \emph{copy} gadgets.  
    }
    \label{fig:NP_typical_example}
\end{figure}

We define four families of sets, denoted \(X,Y,Z,B\), which are variable, clause, copy, and filler gadgets, respectively. 
In the variable gadget, we consider each literal-occurrence and define two consecutive slots. 
For each occurrence \(\ell\) of \(x_i\) or \(\overline{x_i}\) in clause \(C_i\), we create two consecutive slots placed on the appropriate side of \(C_i\) according to the sign of \(\ell\). 
Let \(j\) index the number of occurrences of \(x_i\) if \(F\). 

\begin{definition}[Variable gadget]
    The \emph{variable gadget} for variable occurrence \(x^j_i\) is represented by set \(X^{(j)}_{x_i}\) with chain length \(|X^{(j)}_{x_i}|=1\), and interval \(I(X^{(j)}_{x_i}) = [x^j_i,\overline{x^j_i}]\). 
    The gadget is depicted in Figure~\ref{fig:NP_variable_gadget}. 
\end{definition}
\begin{figure}
    \centering
    \includegraphics[page=4]{images/active_time.pdf}
    \caption{Variable gadget}
    \label{fig:NP_variable_gadget}
\end{figure}

Intuitively, the single job of \(X^{(j)}_{x_i}\) must be scheduled in one of the two slots, which represents the truth value of \(x\). 
Each clause gadget \(C_i\) is constructed from variable gadgets which each represent a literal in the clause. 

\begin{definition}[Clause gadget]
    The \emph{clause gadget} represents clause \(C_k\) with \(n_k\) literals. 
    An example is shown in Figure~\ref{fig:NP_clause_gadget}. 
    For each occurrence \(x^j_i\in C_k\), there is a corresponding \(X^j_{x_i}\) of the corresponding variable gadget that spans the two literal slots. 
    Consider the clause time slot \(C_k\). 
    For a literal occurrence \(x^j_i\), if it is negated, then the corresponding variable gadget is placed to the left of \(C_k\) (and if it is not negated, the corresponding variable gadget is placed to the right of \(C_k\)). 
    To control capacity at the clause time slot \(C_k\), add \(m-n-n_k+1\) filler sets \(B^1_{C_k},\dots,B^{m-n-n_k+1}_{C_k}\) with a chain of length \(1\) with the interval in exactly \(C_k\). 
    For each literal \(x^j_i\) in clause \(C_k\), an interval \(Y^j_{x_i}\) of length one plus the number of variable gadgets it spans is created. 
    \begin{enumerate}
        \item If the literal \(x^j_i\) is negated in clause \(C_k\), \(Y^j_{x_i}\) is the interval \([\overline{x^j_i},C_k]\). 
        \item If the literal \(x^j_i\) is not negated in clause \(C_k\), \(Y^j_{x_i}\) is the interval \([C_k,x^j_i]\). 
    \end{enumerate}
\end{definition}
\begin{figure}
    \centering
    \includegraphics[page=5]{images/active_time.pdf}
    \caption{Clause gadget}
    \label{fig:NP_clause_gadget}
\end{figure}

We add boundary filler sets at the endpoints \(L,R\) of the construction to enforce the balance constraint. 
At time slot \(L\) (resp. \(R\)), add \(m-n/2\) sets \(B^1_L,\dots,B^{m-n/2}_L\) (resp. \(B^1_R,\dots,B^{m-n/2}_R\)) each with a chain of length \(1\) with interval \([L,L]\) (resp. \([R,R]\)). 
We now define the only sets that overlap \(L\) or \(R\) other than the filler sets, which are the copy gadget sets. 

\begin{definition}[Copy gadget]
    The \emph{copy gadget} ensures that all variables in the clauses are set to the same value. 
    An example copy gadget is depicted in Figure~\ref{fig:NP_copy_gadget}. 
    For each variable \(x_i\), the following copy gadget sets are created. 
    \begin{enumerate}
        \item \(Z^1_{x_i}\) spans \([L,x^1_i]\)
        \item \(Z^j_{x_i}\) spans \([\overline{x^{j-1}_i},x^{j}_i]\) for all \(2\leq j\leq n_{x_i}\)
        \item \(Z^{n_{x_i}+1}_{x_i}\) spans \([\overline{x^{n_{x_i}}_{x_i}},R]\)
    \end{enumerate}
    The length of the chain of set \(Z^j_{x_i}\) is one plus the number of variable gadget intervals \(X\) and the number of \(C_i\) time slots that are fully contained in the interval of set \(Z^j_{x_i}\). 
\end{definition}
\begin{figure}
    \centering
    \includegraphics[page=6]{images/active_time.pdf}
    \caption{Copy gadget}
    \label{fig:NP_copy_gadget}
\end{figure}

Using these gadgets, we construct an instance of interval-ordered chains. 
For an optimal schedule, we look at the critical path of the constructed instance. 
We observe the following lemma by the construction of the intervals. 

\begin{lemma}\label{lem:NP_critical_path}
    The constructed instance has a critical path of length \(\tau = k+2+\sum^k_{i=1} n_i\). 
\end{lemma}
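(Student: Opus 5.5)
The plan is to prove the statement in two directions: exhibit a precedence chain of length exactly \(\tau\), and show that no chain is longer. The guiding fact is that the construction reproduces the Saha--Purohit slot layout, which has exactly \(\tau\) slots. There is one left boundary slot \(L\), one right boundary slot \(R\), one clause slot \(C_i\) per clause, and \(n_i\) literal-occurrence slots on each side structure of clause \(C_i\). Each chain length was chosen as ``one plus the number of gadget intervals or clause slots strictly contained in its interval''. I would therefore argue that the jobs of a set together with the jobs of the sets nested inside its interval occupy exactly the slots it spans, with nothing left over.

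For the lower bound on the critical path, I would fix a variable \(x_i\) and follow its copy-gadget path. It starts with the filler \(B_L\) at \(L\). Then comes \(Z^1_{x_i}\) on \([L,x^1_i]\), which strictly follows \(B_L\) only if the intervals are taken half-open at the shared endpoint. I would fix this convention by reading each interval as a closed real interval strictly inside its slot span, so that \(\sup I(S)<\inf I(S')\) holds exactly when the slot spans are disjoint and in order. The path continues with \(X^{(1)}_{x_i}\), \(Z^2_{x_i}\), \(X^{(2)}_{x_i}\), and so on up to \(Z^{n_{x_i}+1}_{x_i}\) and \(B_R\). A copy-gadget set \(Z^j\) cannot directly precede the variable gadget \(X^{(j)}\) it touches, since they share a slot. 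The real path must therefore detour through the sets strictly contained in the gap. Concretely, the jobs of \(Z^j_{x_i}\) form a chain of length one plus the number of \(X\)-intervals and clause slots it contains. Each contained \(X\) contributes 1, and each contained \(C_k\) contributes 1 through a filler \(B_{C_k}\). Summing gives a path whose length equals the number of slots covered, and hence \(\tau\) from \(L\) to \(R\). I would verify this count slot by slot for one variable.

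For the upper bound, I would show by induction over the left-to-right slot order that every job in a set whose interval ends at slot \(s\) has depth at most the index of \(s\). The induction step is a case check over the four families \(X,Y,Z,B\). For a set \(S\) with chain length \(\ell\) spanning slots \(a\) to \(b\), every predecessor set ends before \(a\), so it has depth at most the index of \(a-1\). The jobs of \(S\) then reach depth at most \((a-1)+\ell\le b\). This last inequality, \(\ell\le b-a+1\), is exactly the design rule for chain lengths, and checking it for the \(Y\) and \(Z\) sets is routine counting. The step I expect to be the main obstacle is subtler. A predecessor chain might end strictly inside \(S\)'s span and be followed by a \emph{different} set that is also strictly inside. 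To handle this, the induction must be stated in terms of the end slot, not the set, and the half-open convention from the first step must be applied consistently. Once both bounds hold, the critical path equals \(\tau=k+2+\sum_i n_i\).
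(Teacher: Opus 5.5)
Your proposal has a genuine gap: both bounds rest on a miscount of the slot layout.

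\textbf{Upper bound.} You assume the layout has exactly $\tau$ slots, but every literal occurrence contributes \emph{two} consecutive slots $x^j_i,\overline{x^j_i}$. Both are spanned by the single-job gadget $X^{(j)}_{x_i}$. So the line has $k+2+2\sum_i n_i$ slots, while $\tau=k+2+\sum_i n_i$ counts each occurrence once. Your induction proves ``depth $\le$ index of the end slot'' from the rule $\ell\le b-a+1$, and at $R$ this gives only $k+2+2\sum_i n_i$. That rule is true but loose: every $X$ has length $1$ on two slots. So no slot-counting induction can reach $\tau$.

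The missing idea is a weighting. Give each slot of a literal-occurrence pair weight $1/2$, and give each of $L$, $R$, $C_1,\dots,C_k$ weight $1$; the total weight is exactly $\tau$. Then every set's chain length is at most the weight of the slots its interval covers:
\begin{itemize}
\item equality holds for $B$, for $X$, and for the interior copy sets $Z^j_{x_i}$ with $2\le j\le n_{x_i}$ (two half slots plus the fully contained $X$'s and clause slots);
\item slack $1/2$ holds for $Z^1_{x_i}$, $Z^{n_{x_i}+1}_{x_i}$ and every $Y$. This uses that a $Y$'s length counts only the \emph{fully} contained $X$'s.
\end{itemize}
Sets along a precedence path have pairwise disjoint intervals, so every path has length at most $\tau$. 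Your induction goes through once the slot index is replaced by this weighted prefix. The obstacle you anticipate (a predecessor ending inside the span of $S$) does not arise, because every predecessor of $S$ in the interval order ends before the first slot of $S$. The paper argues the upper bound by rerouting paths through $Z$ and $Y$ sets into the $X$ and $B$ sets they cover; the weighting is a rigorous version of that.

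\textbf{Lower bound.} Your path is not a precedence path. $B_L$ and $Z^1_{x_i}$ both contain slot $L$, and $Z^j_{x_i}$ shares a slot with $X^{(j)}_{x_i}$, so consecutive sets in your sequence are incomparable. Your convention (closed intervals strictly inside the slot spans) does not make $B_L\prec Z^1_{x_i}$, since the spans still share $L$. Changing the convention so that it does would break the reduction, which needs $Z^1_{x_i}$-jobs to run at $L$ alongside the fillers (Lemma~\ref{lem:np_balancedSAT_to_schedule}).

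The copy-gadget path $Z^1_{x_i}\prec Z^2_{x_i}\prec\cdots\prec Z^{n_{x_i}+1}_{x_i}$ is valid, but its length is only $\tau-1$. Your claim that the lengths ``sum to the number of slots covered'' would overshoot $\tau$ in any case. The correct witness is the paper's: $B^1_L$, all variable gadgets $X$, the fillers $B^1_{C_1},\dots,B^1_{C_k}$ and $B^1_R$ have pairwise disjoint intervals. They therefore form a chain of length $1+\sum_i n_i+k+1=\tau$.
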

\begin{proof}
    We first show that the critical path is at least \(k+2+\sum^k_{i=1} n_i\). 
    We consider the critical path induced by precedence-ordered sets. 
    Consider the sets \(B^1_L\), the sets \(X^j_{x_i}\) for all \(j\) and \(i\), sets \(B^1_{C_1},\dots,B^1_{C_k}\) and finally set \(B^1_R\). 
    All considered sets are pairwise disjoint, and therefore, the chain of each set contributes fully to the critical path. 
    The length is therefore at least \(1+\sum^k_{i=1}n_i+k+1 = k+2+\sum^k_{i=1} n_i = \tau\). 
    
    For the upper bound, we consider any precedence constrained sets. 
    The number of unit jobs that are part of any chain \(Z\) is by definition equal to the number of variable gadgets \(X\) it spans, plus one. 
    Since each chain \(Z\) partially overlaps at least one variable gadget, there exists a critical chain that does not include any set \(Z\). 
    For each \(Y\), this is similarly defined as the number of variable gadgets \(X\) it fully contains plus one, and it partially overlaps at least one variable gadget. 
    Therefore, there exists a critical chain that does not include any set \(Y\). 
    This leaves exactly the sets \(L,X,B,R\). 
    Similarly to the lower bound, the length is therefore at most \(1+\sum^k_{i=1}n_i+k+1 = k+2+\sum^k_{i=1} n_i = \tau\). 
    Combining the lower and upper bound, gives that the constructed instance has a critical path of length \(\tau = k+2+\sum^k_{i=1} n_i\)
\end{proof}

We show how the construction enforces the truth value of a literal occurrence. 

\begin{lemma}\label{lem:NP_choice_precedence_constraints}
    Consider literal \(x^j_i\), which is the \(j^{th}\) occurrence of variable \(x_i\). 
    If literal \(x^j_i\) is negated, then \(Z^j_{x_i}\prec Y^j_{x_i}\). 
    If literal \(x^j_i\) is not negated, then \(Y^j_{x_i}\prec Z^{j+1}_{x_i}\). 
    Moreover, variable gadget \(X^j_{x_i}\) overlaps both respective sets. 
\end{lemma}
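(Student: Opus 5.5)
The plan is to read both precedence relations directly off the slot layout, using the defining rule of an interval order: $(S,S')$ is a precedence constraint exactly when $\sup I(S)<\inf I(S')$, i.e. when the interval of $S$ ends strictly before the interval of $S'$ begins. So the whole proof reduces to locating the endpoints of $Z^j_{x_i}$, $Z^{j+1}_{x_i}$, $Y^j_{x_i}$ and $X^j_{x_i}$ on the line of slots $L,\ldots,R$. I will first fix the geometry of the two consecutive slots created for occurrence $j$, which I write $x^j_i$ and $\overline{x^j_i}$ as in the variable gadget $I(X^j_{x_i})=[x^j_i,\overline{x^j_i}]$. Here $x^j_i$ is the left slot and $\overline{x^j_i}$ the right one, and the pair sits left of the clause slot $C_k$ if the literal is negated and right of it otherwise.

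\emph{Negated case.} By the copy gadget definition, $Z^j_{x_i}$ ends at slot $x^j_i$; its left end is $L$ or $\overline{x^{j-1}_i}$, which does not matter here. By the clause gadget, $Y^j_{x_i}=[\overline{x^j_i},C_k]$ starts at $\overline{x^j_i}$. Since $x^j_i$ lies immediately left of $\overline{x^j_i}$, we get $\sup I(Z^j_{x_i})<\inf I(Y^j_{x_i})$, hence $Z^j_{x_i}\prec Y^j_{x_i}$. The gadget $X^j_{x_i}$ spans both slots, so it contains the right endpoint slot of $Z^j_{x_i}$ and the left endpoint slot of $Y^j_{x_i}$, and therefore it overlaps both.

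\emph{Non-negated case.} Here $Y^j_{x_i}=[C_k,x^j_i]$ ends at $x^j_i$. The set $Z^{j+1}_{x_i}$ spans $[\overline{x^j_i},x^{j+1}_i]$, or $[\overline{x^j_i},R]$ if $j=n_{x_i}$, so it starts at $\overline{x^j_i}$. The same adjacency gives $Y^j_{x_i}\prec Z^{j+1}_{x_i}$, and again $X^j_{x_i}$ contains one endpoint slot of each.

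I expect the main obstacle to be consistent bookkeeping of the slot notation, not any real argument. I must make sure the slots are treated as closed unit intervals, so that "ends at slot $x^j_i$" and "starts at the next slot $\overline{x^j_i}$" yield a strict inequality of real endpoints. I must also check that this orientation (occurrence slot left, bar slot right) is the one consistent with the copy gadget indexing, so that the Z chain for variable $x_i$ is threaded left to right through successive occurrences. If the figure's convention placed $\overline{x^j_i}$ first, I would swap the names uniformly and the same two-line argument would go through.
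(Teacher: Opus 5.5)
Your argument is correct and is the same endpoint check that the paper compresses into ``By construction'': $Z^j_{x_i}$ ends at $x^j_i$ while $Y^j_{x_i}$ starts at $\overline{x^j_i}$ (respectively, $Y^j_{x_i}$ ends at $x^j_i$ while $Z^{j+1}_{x_i}$ starts at $\overline{x^j_i}$), and $X^j_{x_i}=[x^j_i,\overline{x^j_i}]$ meets each of the two sets in one of these two slots. One correction to your closing remark: if slots were contiguous closed unit intervals $[t,t+1]$, adjacent slots would share an endpoint and you would get $\sup I(Z^j_{x_i})=\inf I(Y^j_{x_i})$, so the strict inequality instead requires identifying slots with integer points (the paper's discrete convention $[r,d]=\{r,\dots,d\}$), where adjacent slots differ by $1$.
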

\begin{proof}
By construction. 
\end{proof}

To show that the reduction holds, we show that the optimal makespan to the scheduling instance is \(\tau\) if and only if the Balanced SAT formula \(F\) is solvable. 

\begin{lemma}\label{lem:np_balancedSAT_to_schedule}
    If the Balanced SAT instance \(F\) is satisfiable with exactly \(n/2\) variables set to true, then the interval-ordered chains instance admits a feasible schedule of makespan \(\tau\). 
\end{lemma}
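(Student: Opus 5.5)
We construct the schedule explicitly from a balanced satisfying assignment \(\sigma\), one slot of the Saha--Purohit layout per unit of makespan. The slots are \(L\), then for each clause its literal slots and clause slot, then \(R\). Each literal occurrence contributes a pair of slots and each clause one slot, so the layout has \(2+k+\sum_i n_i=\tau\) slots, and time step \(t\) of the makespan corresponds to the \(t\)-th slot. We then fix placements gadget by gadget, in this order.

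The filler jobs \(B\) go in their unique slots \(L\), \(R\), or \(C_k\). For each occurrence \(x^j_i\), the single job of the variable gadget \(X^{(j)}_{x_i}\) goes in one of its two slots, chosen by \(\sigma(x_i)\), consistently over all occurrences of \(x_i\). Each copy chain \(Z^j_{x_i}\) is scheduled one job per slot of its span, skipping exactly the slots taken by the variable gadgets and clause fillers it partially overlaps. By the length definition (one plus the number of fully contained \(X\) intervals and \(C\) slots), there are exactly enough free slots. Consistency of the truth value along the copy chain (Lemma~\ref{lem:NP_choice_precedence_constraints}) is what makes \(Z^j\) and \(Z^{j+1}\) fit together without overlapping in time. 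Finally, each \(Y^j_{x_i}\) is placed similarly. For a literal of clause \(C_k\) made true by \(\sigma\), the \(Y\) chain occupies the slots vacated by the variable gadget's choice. For a false literal, the \(Y\) chain must use the clause slot \(C_k\) itself. At least one literal is true, so at most \(n_k-1\) \(Y\) chains enter \(C_k\). Together with the \(m-n-n_k+1\) fillers and at most \(n\) copy jobs through \(C_k\), this stays within \(m=2n+2\).

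Next we verify the precedence constraints. The interval-order constraints say every job of \(S_i\) precedes every job of \(S_{i'}\) when \(\sup I(S_i)<\inf I(S_{i'})\). Since each chain is scheduled entirely within the slots of its own interval, in increasing slot order, all precedences are respected automatically. Chain order is respected because each chain's jobs are assigned to increasing slots.

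The main obstacle is the per-slot capacity count, which we check separately for \(L\) and \(R\), for literal slots, and for clause slots. At \(L\), the \(m-n/2\) fillers are joined by the first job of \(Z^1_{x_i}\) only for the \(n/2\) variables whose first occurrence makes them start at \(L\), as determined by \(\sigma(x_i)\). Balance gives exactly \(m\) jobs at \(L\), and symmetrically at \(R\). At a literal slot, we count the jobs present: the variable gadget (if chosen there), the copy chains of all variables passing through it (one per variable not currently at a gadget, roughly \(n\)), and the \(Y\) chains crossing it. This must be at most \(2n+2\). I would tabulate this by cases on sign and truth value, mirroring the Saha--Purohit counting, and expect this bookkeeping to be the delicate part. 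If a slot overflows, I would shift which free slot the \(Y\) or \(Z\) chain uses within its span, since only the number of jobs per chain is fixed, not their positions.
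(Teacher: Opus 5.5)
There is a genuine gap at your very first step. Every literal occurrence gets \emph{two} slots in the layout, so the layout has \(2+k+2\sum_i n_i\) slots, not \(2+k+\sum_i n_i=\tau\). If you identify makespan steps with layout slots, the fillers \(B_R\) land at step \(2+k+2\sum_i n_i>\tau\), so the schedule you build does not have makespan \(\tau\).

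This is not a cosmetic miscount. In the IOC instance the layout slots only define the intervals, i.e.\ the precedences. A schedule of makespan \(\tau\) is forced to follow the critical chain of Lemma~\ref{lem:NP_critical_path}: one step for \(L\), one step per variable gadget \(X^j_{x_i}\), one step per clause \(C_k\), and one for \(R\). The single job of \(X^j_{x_i}\) is therefore pinned to its own step of that chain. It has no choice between ``its two slots'', and there is no vacated slot for a \(Y\) chain to fill. The mechanism you use is the active-time mechanism of Saha and Purohit, and it does not carry over to makespan.

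The paper instead encodes the truth value by which overlapping chain shares the step of \(X^j_{x_i}\). That is one job of \(Y^j_{x_i}\) if the literal is satisfied, and otherwise one job of \(Z^j_{x_i}\) (negated literal) or \(Z^{j+1}_{x_i}\) (positive literal). In addition, the chains \(Z^1_{x_i}\) of false variables put a job at \(L\), and the last copy chains of true variables put one at \(R\). The \(Y\) chain of an unsatisfied literal is pushed into the step of \(C_k\). Your counts at \(L\), \(R\) and \(C_k\) agree with this, but on the correct time axis two other parts of your argument must be redone.

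First, precedence is no longer automatic. We have \(Z^j_{x_i}\prec Z^{j+1}_{x_i}\), and \(Z^j_{x_i}\prec Y^j_{x_i}\) resp.\ \(Y^j_{x_i}\prec Z^{j+1}_{x_i}\) by Lemma~\ref{lem:NP_choice_precedence_constraints}. Yet all these intervals meet the slot pair of \(X^j_{x_i}\), so they compete for the same step. It is precisely the consistent choice dictated by \(\sigma\) that keeps comparable chains off a common step.

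Second, you only announce the capacity check at the \(X\)-steps (``tabulate'', ``shift if a slot overflows''); it has to be carried out. At the step of \(X^j_{x_i}\) you have:
\begin{itemize}
\item the \(X\) job;
\item one job from the copy chain of each of the other \(n-1\) variables, whose interval fully contains this pair;
\item at most two chains of \(x_i\) itself;
\item the \(Y\) chains of the same clause that pass over the pair.
\end{itemize}
You must show this total is at most \(m=2n+2\).
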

\begin{proof}
    Consider a balanced satisfying assignment for \(F\). 
    We describe a schedule of makespan \(\tau\) that respects all precedence and capacity constraints. 
    We assign jobs from sets to specific time slots. 

    At time slot \(L\), schedule all \(m-n/2\) sets \(B_L\). 
    For each variable \(x_i\) that is set to \(false\), schedule one job of \(Z^1_{x_i}\) in \(L\). 
    This uses \((m-n/2)+(n/2)=m\) capacity. 
    At \(R\), schedule all \(m-n/2\) sets \(B_R\) and for each variable \(x_i\) that is set to \(true\), assign the last job in the chain of set \(Z^{n_{x_i}+1}_{x_i}\) in \(R\). 
    This uses \((m-n/2)+(n/2)=m\) capacity. 

    For the literals in each clause, if \(x^j_i\) is satisfied, schedule one job of \(Y^j_{x_i}\) with \(X^j_{x_i}\). 
    If \(x^j_i\) is not satisfied, schedule one job of \(Z^j_{x_i}\) with \(X^j_{x_i}\) if \(x^j_i\) is negated, or \(Z^{j+1}_{x_i}\) with \(X^j_{x_i}\) if \(x^j_i\) is not negated. 
    For each clause \(C_k\), we schedule all \(m-n-n_k+1\) fillers \(B_{C_k}\) at \(C_k\). 
    Schedule a job from all \(Y^j_{x_i}\) for which \(x_i\) is not satisfied with \(C_k\). 
    Since the clause is satisfied, there is at least one literal that is satisfied, say \(x^j_i\), for which one job from \(Y^j_{x_i}\) is scheduled with \(X^j_{x_i}\). 
    This gives a remaining capacity of \(m-(m-n-n_k+1)-(n_k-1)=n\) at slot \(C_k\). 
    Schedule a job from all \(Z\) jobs that overlap with \(C_k\). 
    This uses exactly \(n\) capacity which saturates the capacity at slot \(C_k\). 

    All remaining jobs of \(Y\) and \(Z\) sets can now be placed inside their intervals with the scheduled jobs since \(m=2n+2\) and each literal pair contributes two unit jobs at each of their slots. 

    All jobs respect the precedence constraints, as only incomparable sets share the same time slot. 
    Furthermore, as shown in Lemma~\ref{lem:NP_critical_path} all slots used lies within the \(\tau\) slots. 
    Therefore, the interval-ordered chains instance admits a feasible schedule of makespan \(\tau\). 
\end{proof}

\begin{lemma}\label{lem:np_schedule_to_balancedSAT}
    If the interval-ordered chains instance admits a feasible schedule of makespan \(\tau\), then the Balanced SAT instance \(F\) is satisfiable with exactly \(n/2\) variables set to true. 
\end{lemma}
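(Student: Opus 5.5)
Since the makespan is at most \(\tau\), which equals the critical path length by Lemma~\ref{lem:NP_critical_path}, the plan is to show that every schedule of makespan \(\tau\) is \emph{tight}: each job on the critical chain \(B_L, X, B_{C_1},\dots, B_{C_k}, B_R\) is forced to a unique slot. Concretely, the \(\tau\) schedule slots are identified one-to-one with the slots \(L\), the literal slots, \(C_1,\dots,C_k\), \(R\) of the Saha--Purohit layout. From this identification, every set \(Z\) or \(Y\) can only use slots inside its own interval, because its interval-order predecessors and successors occupy exactly the slots outside that interval. The first step is this rigidity argument. Each set lying on a longest chain has zero slack, so its earliest and latest start coincide. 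For \(Y\) and \(Z\), whose chain length is one plus the number of fully contained gadgets, the slack is exactly the slots of the partially overlapped variable gadgets at their ends.

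Next comes a capacity count at the boundary slots. At \(L\) there are \(m-n/2\) fillers, so at most \(n/2\) copy chains \(Z^1_{x_i}\) can place a job in \(L\). Symmetrically, at most \(n/2\) chains \(Z^{n_{x_i}+1}_{x_i}\) can place a job in \(R\). Each variable \(x_i\) has a chain \(Z^1_{x_i}\to\) (through \(X\)'s) \(\to Z^{n_{x_i}+1}_{x_i}\) whose combined length exceeds the available interior slots by one. So each variable must use either \(L\) or \(R\); I would verify this by summing chain lengths along the copy gadget against the slots it spans. Hence exactly \(n/2\) variables use \(L\) and \(n/2\) use \(R\), and we define \(x_i\) false iff \(Z^1_{x_i}\) uses \(L\). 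This yields the balance condition.

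Then consistency along occurrences must be propagated. By Lemma~\ref{lem:NP_choice_precedence_constraints}, the single job of \(X^j_{x_i}\) sits in one of its two slots. If \(Z^j_{x_i}\) is ``pushed'' (it used its left slack), then the capacity at the two slots of \(X^j_{x_i}\) (capacity \(m=2n+2\), nearly saturated by the other crossing chains) forces \(Z^{j+1}_{x_i}\) to be pushed the same way. I would show that at each literal slot pair, the jobs of chains crossing it fill all but the slack needed for one of \(Z\) or \(Y\). Consequently, if the variable is false/true consistently, the \(Y^j\) job is forced to \(C_k\) exactly when the literal is unsatisfied.

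Finally, the clause check at \(C_k\). The slot \(C_k\) holds \(m-n-n_k+1\) fillers plus \(n\) copy-chain jobs, one for each variable's \(Z\) crossing \(C_k\); these are forced since every variable has exactly one \(Z\) chain spanning \(C_k\) with zero slack there. This leaves room for only \(n_k-1\) of the \(Y\) jobs. Hence some literal's \(Y\) job is not at \(C_k\), so that literal is satisfied, and the assignment satisfies every clause. I expect the main obstacle to be the rigorous slack/capacity accounting in the consistency step, in particular proving that at each literal slot pair the capacity genuinely forbids both \(Y\) and the adjacent \(Z\) from using the same slack. I would first tabulate, per slot, the number of chains that must occupy it, and compare this number with \(m\).
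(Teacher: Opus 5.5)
Your architecture matches the paper's: rigidity from the critical chain, $n/2$ free capacity at $L$ and $R$ for balance, and overflow at $C_k$ for satisfaction. However, the two steps that connect these pieces rest on claims that fail for this construction.

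First, rigidity does not tell you where a $Y$ or $Z$ set places its slack. Each such set has exactly one more time step in its range than it has jobs, so it skips exactly one step. A priori that step can be anywhere in its range, including a $C_k$ step, so your ``zero slack at $C_k$'' is unjustified as stated.

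Second, and more seriously, your consistency step is a capacity argument at ``the two slots of $X^j$''. But $X^j_{x_i}$ occupies a single schedule step, since $\tau$ counts one step per literal occurrence. At that step at most $1+n+n_k$ jobs can ever be present: the $X$ job, at most one copy job per variable, and the $Y$ sets of that clause reaching it. This is strictly below $m=2n+2$ whenever $n_k\le n$. So the per-slot tabulation you plan would show no forcing there, and the step you flagged as the main obstacle would genuinely fail.

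The missing idea is that the forcing at literal steps comes from the interval order, not from capacity. We have $\sup I(Z^j_{x_i})=x^j_i<\overline{x^j_i}=\inf I(Z^{j+1}_{x_i})$, so $Z^j_{x_i}\prec Z^{j+1}_{x_i}$ and the two can never share the step of $X^j_{x_i}$. Likewise $Z^j_{x_i}\prec Y^j_{x_i}$ for a negated occurrence and $Y^j_{x_i}\prec Z^{j+1}_{x_i}$ for a positive one (Lemma~\ref{lem:NP_choice_precedence_constraints}). This is what the paper's proof relies on.

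With this in hand, your own balance count finishes everything. The jobs of $Z^1_{x_i},\dots,Z^{n_{x_i}+1}_{x_i}$ form a single chain of exactly $\tau-1$ jobs. Once each variable uses exactly one of $L,R$, this chain must occupy all $\tau-2$ interior steps. That gives three things:
\begin{enumerate}
\item[(i)] Consistency: either $Z^j_{x_i}$ shares the step of $X^j_{x_i}$ for every $j$, or $Z^{j+1}_{x_i}$ does for every $j$.
\item[(ii)] Exactly one copy job of every variable at each $C_k$, which is the $n$ you need there.
\item[(iii)] If a literal occurrence is false, the adjacent $Z$ sits at the step of $X^j_{x_i}$. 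By precedence, $Y^j_{x_i}$ must skip that step, and having only one unit of slack it must occupy $C_k$.
\end{enumerate}
Your count $(m-n-n_k+1)+n+n_k=m+1>m$ at an unsatisfied clause then gives the contradiction.
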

\begin{proof}
    Consider an optimal schedule \(S^*\) with makespan \(\tau\). 
    Since by Lemma~\ref{lem:NP_critical_path}, \(\tau\) equals the critical path lower bound, the optimal schedule \(S^*\) cannot schedule beyond this critical path. 

    
    First, we verify the copy gadget. 
    We say that an occurrence of variable \(x^j_i\) is set to \(true\) if the job unit from \(X^j_{x_i}\) schedules with \(Z^j_{x_i}\), and set to \(false\) if the job unit from \(X^j_{x_i}\) schedules with \(Z^{j+1}_{x_i}\). 
    Consider the case that for some set \(Z^j_{x_i}\), no job unit from its chain is scheduled with either \(X^{j-1}_{x_i}\) or \(X^j_{x_i}\). 
    By construction, the chain of \(Z^j_{x_i}\) is one more than the remaining number of variable gadgets and clause time slots that its interval spans. 
    Therefore, this case would add a time slot to the makespan that is not part of the critical path, which gives a contradiction. 
    %
    For all the sets with \(Z^j_{x_i}\) with \(j\in[2,n_{x_i}]\), this gives the same value. 
    In the case that \(Z^2_{x_i}\) is scheduled with \(X^1_{x_i}\), a job unit from \(Z^1_{x_i}\) must schedule with \(B_L\). 
    Otherwise, a time slot that is not part of the critical path would be added to the makespan. 
    Since \(x_i\) is set to \(false\) if \(Z^2_{x_i}\) schedules with \(X^1_{x_i}\) and the remaining \(Z^j_{x_i}\) copies the value, \(B_L\) must schedule with a job unit from \(Z^1_{x_i}\) when \(x_i\) is set to \(false\). 
    A similar argument holds for the need to schedule with \(B_R\) when \(x_i\) is set to \(true\). 
    %
    Since exactly \(n/2\) processors are available to schedule with \(B_L\) and \(B_R\), this means that exactly \(n/2\) variables must be set to \(false\), and similarly exactly \(n/2\) variables must be set to \(true\). 
    Since we verified that the copy gadget ensures all occurrences of the variable are set to the same value, we will talk about the value of a variable. 
    
    We now look at the clauses and check all are satisfied. 
    By the construction of the copy gadget sets must schedule with \(B_{C_i}\) for all \(C_i\) in order to keep scheduling with the critical path. 
    This uses \(n\) processors for all \(C_i\). 
    
    %
    Assume that some clause \(C_k\) is not satisfied. 
    Since \(C_k\) is not satisfied, all negated variables are set to \(true\) and all non-negated variables are set to \(false\). 
    By the construction of the copy gadget that we verified earlier, if a negated variable \(x^j_i\) is set to \(true\), set \(Z^j_{x_i}\) has a job unit that schedules with \(X^j_{x_i}\). 
    By the interval precedence constraints, no job from the set \(Y^j_{x_i}\) may schedule with a job unit from set \(X^j_{x_i}\). 
    Similarly for a non-negated variable \(x^j_i\) that is set to \(false\), set \(Z^{j+1}_{x_i}\) has a job unit that schedules with \(X^j_{x_i}\) and no job from the set \(Y^j_{x_i}\) may schedule with a job unit from set \(X^j_{x_i}\). 
    By the construction of the sets \(Y^j_{x_i}\), the chain has length equal to the number of \(X\) intervals it fully contains, plus one. 
    Since \(Y^j_{x_i}\) can't schedule with \(X^j_{x_i}\), the additional job unit must schedule with \(B_{C_k}\). 
    Since all copy gadget jobs must schedule in \(C_k\), and there are \(m-n-n_k+1\) sets \(B_{C_k}\) that schedule in parallel that are part of the critical path and \(n_k\) job units from the sets \(Y^j_{x_i}\) schedule in parallel, this exceeds the capacity \(m\) which gives a contradiction. 
    Therefore, there must exist one variable set such that clause \(C_k\) is satisfied. 
\end{proof}

Combining Lemma~\ref{lem:np_balancedSAT_to_schedule} and Lemma~\ref{lem:np_schedule_to_balancedSAT}, this proves both directions of the reduction. 
The NP-hardness follows for the complexity of scheduling interval-ordered chains.

\end{document}